\declaretheoremstyle[
  headfont=\normalfont\itshape,
  headpunct=:,
  bodyfont=\normalfont,
]{mcx_problem}
\theoremstyle{definition}
\newtheorem{definition}{Definition}[section]
\theoremstyle{definition}
\newtheorem{theorem}[definition]{Theorem}
\theoremstyle{definition}
\newcommand{\eq}[1]{\hyperref[eq:#1]{Equation~\ref*{eq:#1}}}
\renewcommand{\sec}[1]{\hyperref[sec:#1]{Section~\ref*{sec:#1}}}
\DeclareRobustCommand{\app}[1]{\hyperref[app:#1]{Appendix~\ref*{app:#1}}}
\newcommand{\fig}[1]{\hyperref[fig:#1]{Figure~\ref*{fig:#1}}}
\newcommand{\tbl}[1]{\hyperref[tbl:#1]{Table~\ref*{tbl:#1}}}
\newcommand{\problemref}[1]{\hyperref[problem:#1]{Problem~\ref*{problem:#1}}}
\newcommand{\theoremref}[1]{\hyperref[theorem:#1]{Theorem~\ref*{theorem:#1}}}
\newcommand{\definitionref}[1]{\hyperref[definition:#1]{Definition~\ref*{definition:#1}}}
\newcommand{\Google}{\affiliation{Google Quantum AI, Santa Barbara, California 93117, USA}}
\newcommand{\mcxoptcomplexityclean}{$2n - 3$ }
\newcommand{\mcxoptcomplexitydirty}{$4n - 8$ }
\newcommand{\gray}{\cellcolor{lightgray!60}}
\begin{document}

\title{Rise of conditionally clean ancillae for efficient quantum circuit constructions}

\date{February 11, 2025}

\author{Tanuj Khattar}
\email[Corresponding author: ]{tanujkhattar4@gmail.com}
\Google

\author{Craig Gidney}
\email[Corresponding author: ]{craig.gidney@gmail.com}
\Google

\begin{abstract}
We introduce conditionally clean ancilla qubits, a new quantum resource, recently explored by~\cite{nie2024quantum}, that bridges the gap between traditional clean and dirty ancillae.  
Like dirty ancillae, they begin and end in an unknown state and can be borrowed from existing system qubits, avoiding the space overhead of explicit qubit allocation.  
Like clean ancillae, they can be treated as initialized in a known state within specific computations, thus avoiding the overhead of toggle detection required for dirty ancillae.
We present new circuit constructions leveraging conditionally clean ancillae to achieve lower gate counts and depths, particularly with limited ancilla availability. Specifically, we provide constructions for:


\begin{enumerate}[label=(\alph*), noitemsep]
\item $n$-controlled NOT using $2n$ Toffolis and $\mathcal{O}(\log{n})$ depth given 2 clean ancillae.
\item $n$-qubit incrementer using $3n$ Toffolis given $\log_2^*{n}$ clean ancillae. 
\item $n$-qubit quantum-classical comparator using $3n$ Toffolis given $\log_2^*{n}$ clean ancillae.
\item unary iteration over $[0,N)$ using $2.5N$ Toffolis given $\log_2^*{n}$ clean ancillae.
\item unary iteration via skew tree over $[0, N)$ using $1.25N$ Toffolis given $n$ dirty ancillae.
\end{enumerate}

We also introduce \emph{laddered toggle detection}, a technique to replace clean ancillae with dirty ancillae in all our constructions, incurring a 2x Toffoli gate overhead.  
Our results demonstrate that conditionally clean ancillae are a valuable tool for quantum circuit design, especially in the resource-constrained early fault-tolerant era.
%

\end{abstract}

\maketitle
\emph{\textbf{Data availability:}} The circuit constructions presented in this paper are available at

\href{https://doi.org/10.5281/zenodo.12819218}{https://doi.org/10.5281/zenodo.12819218}.
All circuits are implemented as Bloqs using 

the \href{https://github.com/quantumlib/Qualtran}{Qualtran} library\cite{harrigan2024expressinganalyzingquantumalgorithms} and are tested using classical and quantum simulations.

{
  \hypersetup{linkcolor=black}
  \tableofcontents
}

\section{Introduction}

Quantum circuit design often involves trade-offs between the number of qubits used (space complexity), the number of gates required (time complexity), and the circuit depth. In the Noisy Intermediate-Scale Quantum (NISQ) era and the Early Fault-Tolerant era, minimizing qubit overhead is crucial due to the limited availability of high-quality qubits. In the fault tolerant regime, the cost of executing a circuit is often dominated by the number of non-clifford gates like T / Toffoli gates, which are significantly more expensive to execute than Clifford gates~\cite{Litinski2019, Fowler2012}. 


Several works have focused on coming up with efficient circuit constructions, with low T / Toffoli counts and reduced ancilla usage, for a wide variety of applications like quantum chemistry~\cite{PRXQuantum.4.040303, Kim2022, Lee2021}, quantum dynamics~\cite{Rubin2024, agrawal2024quantifyingfaulttolerantsimulation}, combinatorial optimization~\cite{Sanders2020}, and quantum arithmetic circuits such as for Shor's algorithm~\cite{gidney2018factoring,litinski2023compute256bitellipticcurve}.
There are often tradeoffs where one can reduce the number of T / Toffoli gates in the decomposition by using a greater number of ancilla qubits as temporary workspace~\cite{gidney2015blog, gidney2015blogincrement, Gidney2018halvingcostof, zhu2024unifiedarchitecturequantumlookup}. Ancilla qubits come in two main varieties: \emph{clean} and \emph{dirty}.

A \emph{clean} ancilla is initialized in a known state, typically $\ket{0}$, and can be discarded or reused after the computation. 
It is characterized via
\begin{itemize}[itemsep=0.001cm]
    \item[$\times$] \underline{Expensive Allocation:} Need to allocate dedicated qubits, which adds space overhead.
    \item[\checkmark]  \underline{Cheap Consumption:} Since the initial state is known, any intermediate result stored on the clean ancilla can be consumed directly via a single controlled unitary. 
    \item[\checkmark]  \underline{Cheap Uncomputation:} Using \emph{measurement-based uncomputation}~\cite{Jones2013, Gidney2018halvingcostof} whereby a clean ancilla that is entangled with other qubits can be disentangled and returned to the $\ket{0}$ state by measuring in the X basis and applying classically controlled phase corrections.
\end{itemize}

A \emph{dirty} ancilla is in an unknown, arbitrary state, that can be temporarily perturbed but must be restored eventually.
It is characterized via
\begin{itemize}[itemsep=0.001cm]
    \item[\checkmark] \underline{Cheap Allocation:} Can \emph{borrow} any system qubit not part of the decomposition.
    \item[$\times$]  \underline{Expensive Consumption:} Since the initial state is unknown, we need to use \emph{toggle detection}~\cite{gidney2015blog, nie2024quantum} to consume the intermediate result which doubles the gate count.  
    \item[$\times$]  \underline{Expensive Uncomputation:} Since the dirty ancilla must be returned to its original unknown state, we cannot use measurement based uncomputation.
 \end{itemize}

This paper describes and explores the utility of a third type of ancilla: the \emph{conditionally clean} ancilla~\cite{nie2024quantum}. 
A conditionally clean ancilla is, strictly speaking, a dirty ancilla.  It is in an unknown initial state and must be restored to the same state after use.  
However, it is \emph{conditionally} clean in the sense that, given a specific condition on other qubits in the system, the ancilla is guaranteed to be in a known state. 
This allows us to treat it \emph{as if} it were clean in many situations, avoiding the overhead associated with fully dirty ancillae. 
It is characterized via
\begin{itemize}[itemsep=0.001cm]
    \item[\checkmark] \underline{Cheap Allocation:} Can \emph{borrow} system qubits that satisfy certain criteria. 
    
    \item[\checkmark]  \underline{Cheap Consumption:} Since the initial state is conditionally known, we can consume intermediate results without the need for toggle detection. 
    
    \item[$\times$]  \underline{Expensive Uncomputation:} Since the conditionally clean ancilla must be returned to its original unknown state, we cannot use measurement based uncomputation.
 \end{itemize}

We summarize our contributions as follows:

\begin{enumerate}
    \item We formally define conditionally clean ancillae and provide a general framework for their use in quantum circuit design.
    \item We introduce laddered toggle detection, an improved technique for substituting dirty ancillae for clean ancillae in circuits where multiple clean ancillae are required.
    \item We present novel circuit constructions that leverage conditionally clean ancillae to achieve significant improvements in gate count and depth for several fundamental primitives:
    \begin{itemize}
        \item An $n$-controlled NOT (Toffoli) gate with optimal Toffoli count ($2n-3$) using only two clean ancillae and achieving $\mathcal{O}(\log n)$ depth.
        \item An $n$-qubit incrementer and an $n$-qubit quantum-classical comparator, both using $3n$ Toffoli gates and requiring only $\log_2^*{n}$ clean ancillae (where $\log_2^*{n}$ is the iterated logarithm, which is less than or equal to 5 for all practical input sizes).
        \item Unary iteration circuits (used in QROM) with improved Toffoli counts using either a constant number of clean ancillae or $n$ dirty ancillae.
    \end{itemize}
\end{enumerate}

We believe these constructions demonstrate that conditionally clean ancillae are an obviously useful tool for optimizing quantum circuits, especially in the resource-constrained regime of early fault tolerance, and that there will be many other uses beyond the specific constructions that we provide.

The rest of this paper is structured as follows. \sec{conditional_clean} formally defines conditionally clean ancillae and provides the general framework. \sec{laddered_toggle_detection} describes laddered toggle detection. \sec{mcx_constructions} presents our optimized constructions for the $n$-controlled NOT gate. \sec{produce_consume_all_and} detail the incrementer and comparator circuits.  \sec{unary_iteration} covers unary iteration. Finally, \sec{conclusion} concludes the paper.

\section{Background \& Related Work}

There has been significant prior work on optimizing quantum circuit constructions, particularly focusing on minimizing Toffoli gate counts and ancilla usage.  Many techniques rely on the use of clean or dirty ancillae to decompose multi-qubit operations into smaller gates~\cite{gidney2015blog, gidney2015blogincrement, Gidney2018halvingcostof, zhu2024unifiedarchitecturequantumlookup}.  An $n$-controlled NOT gate, denoted $\text{MCX}_n$, applies a NOT (X) gate to a target qubit controlled on the state of $n$ control qubits. Efficient constructions for MCX gates are fundamental building blocks for many quantum algorithms.

There have been several recent papers independently discovering or using conditionally clean qubits.
It seems to be an example of an idea whose time is \emph{due}.
The papers we found that described or used conditionally clean qubits are:

\begin{itemize}
    \item
    \cite{claudon2024polylogarithmicdepth}.
    Figure 1 and Section II A divides the $n$ controls into $n / b$ blocks, each of size $b$ and performs an $\text{MCX}_{b}$ with a single clean ancilla as the target to generate $b$ conditionally clean ancillae, which are then used as targets to perform $b$ different $\text{MCX}_{n/b}$ gates in parallel. Since their focus is on reducing the depth of the circuit, they obtain a decomposition with depth $\mathcal{O}(\log^3{n})$ and Toffoli count $\mathcal{O}(n\log^4{n})$ for one clean ancilla case. 

    \item
    \cite{nie2024quantum}.
    They explain the idea of conditionally clean qubits in Section 3 and note that it may be of interest to the community beyond their specific constructions. Since their focus is on getting constructions with optimal scaling, they give a construction for a $\text{MCX}_{n}$ in Figure 3 with $\mathcal{O}(n)$ Toffoli count and $\mathcal{O}(\log{n})$ Toffoli depth but do not perform constant factor analysis. We give an implementation of their construction in the supplementary material for completeness and show that their constant factors for Toffoli count is $3n$, whereas our construction has the optimal Toffoli count of $2n - 3$ while preserving the $\log{n}$-depth. 

    \item We also want to mention \cite{cryptoeprint:2024/222}, because our first realization of the concept of a conditionally clean qubit was triggered by trying to understand why Figure 2 of that paper worked.

\end{itemize}

Many of the figures in this paper are generated using the Quirk quantum circuit simulator (\href{https://algassert.com/quirk}{algassert.com/quirk}).  In Quirk diagrams:

\begin{itemize}
    \item Wires represent qubits.
    \item Gates are represented by boxes.  A filled circle on a wire represents a control qubit in the $\ket{1}$ state.  An open circle represents a control in the $\ket{0}$ state.
    \item The ``Chance" display corresponds to the green boxes and shows the probability of measuring the qubit in the $\ket{1}$ state.
    \item The ``Bloch" sphere display shows the quantum state of a single qubit on the Bloch sphere.
\end{itemize}

\begin{table}
\begin{adjustwidth}{-2cm}{-2cm}
    \centering
    \begin{tabular}{|p{2.3cm}|p{2.6cm}|p{2.7cm}|p{2.7cm}|p{1.4cm}|p{2.5cm}|p{1.4cm}|} 
 \hline
 \multicolumn{3}{|c|}{Gate} & \multicolumn{2}{c|}{Toffoli} & \multicolumn{2}{c|}{T} \\ 
  \hline
   Type & Ancilla & Source & Cost & Depth & Cost & Depth \\ 
 \hline
   $\text{MCX}_{n}$  
    & \gray One clean  & \gray \cite{gidney2015blog} & \gray $6n$  & \gray $\mathcal{O}(n)$  & \gray $24n$ & \gray $\mathcal{O}(n)$ \\ 

    & \gray One dirty  & \gray \cite{gidney2015blog} & \gray $8n$  & \gray $\mathcal{O}(n)$  & \gray $32n$ & \gray $\mathcal{O}(n)$ \\ 

    & \gray One dirty  & \gray \cite{zindorf2024efficient} & \gray \mcxoptcomplexitydirty  & \gray $\mathcal{O}(n)$  & \gray $16n - 32$ & \gray 8n - 6 \\ 

    & \gray One clean  & \gray \cite{claudon2024polylogarithmicdepth} & \gray $\mathcal{O}(n\log^4{n})$  & \gray $\mathcal{O}(\log^3{n})$  & \gray $\mathcal{O}(n\log^4{n})$ & \gray $\mathcal{O}(\log^3{n})$ \\

    & \gray One clean  & \gray \cite{nie2024quantum} & \gray $\mathcal{O}(n)$ ($\approx 3n$)  & \gray $\mathcal{O}(\log{n})$  & \gray $\mathcal{O}(n)$ & \gray $\mathcal{O}(\log{n})$ \\
   
    & One clean  &  Ours - \sec{one_clean_n_depth}   & $2n - 3$    &  $\mathcal{O}(n)$   &   $8n - 12$   &  $\mathcal{O}(n)$ \\  

    & Two clean  &  Ours - \sec{two_clean_logn_depth}   & $2n - 3$    &  $\mathcal{O}(\log{n})$   &   $8n - 12$   &  $\mathcal{O}(\log{n})$ \\  

    & One dirty  & Ours - \sec{one_dirty_n_depth}   & \mcxoptcomplexitydirty    &  $\mathcal{O}(n)$   &   $16n - 32$   &  $\mathcal{O}(n)$ \\
    & Two dirty  & Ours - \sec{two_dirty_logn_depth}   & \mcxoptcomplexitydirty    &  $\mathcal{O}(\log{n})$   &   $16n - 32$   &  $\mathcal{O}(\log{n})$ \\
 
 \hline
    $\text{Incrementer}_{n}$    
    & \gray One clean     
    & \gray \cite{gidney2015blogincrement}   & \gray $32n$   & \gray  $\mathcal{O}(n)$  & \gray  $128n$   & \gray $\mathcal{O}(n)$ \\ 
    
    & \gray One clean &
    \gray \cite{nie2024quantum}       & \gray $\mathcal{O}(n)$ ($\approx 64n$)   & \gray  $\mathcal{O}(\log^2{n})$  & \gray  $\mathcal{O}(n)$ ($\approx 256n$)   & \gray $\mathcal{O}(\log^2{n})$ \\ 
    
    & $\log_2^*{n}$ clean & Ours - \sec{n_bit_incrementer}   & $3n$    &  $\mathcal{O}(n)$   &   $12n$   &  $\mathcal{O}(n)$ \\
    
 \hline
    $U_{t \oplus (x < c)}$    
    & \gray Two clean     
    & \gray \cite{gidney2018factoring}   & \gray $\mathcal{O}(n\log{n})$   & \gray $\mathcal{O}(n\log{n})$  & \gray $\mathcal{O}(n\log{n})$   & \gray $\mathcal{O}(n\log{n})$ \\ 
    
    ($\text{LessThanC}_{n}$) & \gray One clean &
    \gray \cite{yuan2023improved}       & \gray $\mathcal{O}(n^2)$    & \gray $\mathcal{O}(n^2)$  & \gray $\mathcal{O}(n^2)$   & \gray $\mathcal{O}(n^2)$ \\

    & $\log_2^*{n}$ clean & Ours - \sec{n_bit_comparator}   & $3n$    &  $\mathcal{O}(n)$   &   $12n$   &  $\mathcal{O}(n)$ \\
  \hline
    $\text{UnaryIteration}$    
    & \gray $n=\log_2{N}$ clean     
    & \gray \cite{Babbush2018}   & \gray $N - 1$   & \gray $\mathcal{O}(N)$  & \gray $4N - 4$   & \gray $\mathcal{O}(N)$ \\   

    (balanced tree) & $\log_2^*{n}$ clean & Ours - \sec{unary_iteration_cca}   & $2.5N - 1$ &  $\mathcal{O}(N)$   &   $10N - 4$   &  $\mathcal{O}(N)$ \\

    (skew tree) & $\log_2^*{n}$ clean & Ours - \sec{unary_iteration_cca}   & $2.25N - 1$ &  $\mathcal{O}(N)$   &   $9N - 4$   &  $\mathcal{O}(N)$ \\

    (balanced tree) & $n=\log_2{N}$ dirty & Ours - \sec{unary_iteration_dirty}   & $1.5N + \mathcal{O}(n\sqrt{N})$ &  $\mathcal{O}(N)$   &   $6N + \mathcal{O}(n\sqrt{N})$   &  $\mathcal{O}(N)$ \\

    (skew tree) & $n=\log_2{N}$ dirty & Ours - \sec{unary_iteration_dirty}   & $1.25N + \mathcal{O}(n\sqrt{N})$ &  $\mathcal{O}(N)$   &   $5N + \mathcal{O}(n\sqrt{N})$   &  $\mathcal{O}(N)$ \\

 \hline
    \end{tabular}
    \caption{Comparison of prior work (shaded) to our constructions (not shaded).
    Note that $\log_2^*{n} \leq 5$ for all practical purposes.
    }
    \label{tbl:costs_table}
\end{adjustwidth}
\end{table}

\section{Conditionally Clean Ancillae} \label{sec:conditional_clean}

\begin{definition}[Conditionally Clean Ancilla] \label{definition:cca}
Let $C$ be a set of control qubits in an arbitrary state $\ket{\Psi}_c$. Let $\ket{x}_c$ be a state on the control register that satisfies a predicate $P$. 
An ancilla qubit $\ket{cca}$ is a \emph{conditionally clean ancilla} with respect to the predicate $P$ if, when $P$ is True, the ancilla qubit $\ket{cca}$ is in a known classical state $\ket{f(x)}$. 
In other words, we can write the combined state of control register $\ket{\Psi}_c$ and ancilla register $\ket{cca}_{P}$ as 

$$
    \ket{\Psi}_c\ket{cca}_{P} =  \alpha_{x}\ket{x}_c\ket{f(x)} + \alpha_{x^\perp}\ket{x^\perp}_c\ket{\psi_{f(x^\perp)}}
$$

where $\ket{x}_c$ is the control state that satisfies predicate $P$, $\alpha_{x^\perp}$ is the perpendicular control state that does not satisfy the predicate $P$, $\ket{f(x)}$ is a known classical state, $\ket{\psi_{f(x^\perp)}}$ is an arbitrary quantum state. 
\end{definition}

From the above definition, one can see that for any controlled operation $ CU_{P, S} = U_{S}\ket{x}\bra{x} + I_{S} \ket{x^\perp}\bra{x^\perp}$ (meaning $U$ is applied if and only if $P$ is true), the conditionally clean qubit $\ket{cca}$ can be used as a clean qubit to act as a temporary workspace. This is because in the subspace where $P$ is True and $U$ is applied, $\ket{cca}$ has a known classical state $\ket{f(x)}$ and thus any intermediate results stored on $\ket{cca}$ can be consumed without toggle detection.

A key example, and the one we primarily focus on in this paper, is based on the quantum implementation of the AND gate using a Toffoli gate.  A Toffoli gate has two control qubits ($c_1$, $c_2$) and a target qubit ($t$). It flips the target if and only if both controls are in the state $\ket{1}$.
Consider the following scenario:  We have two qubits, $c_1$ and $c_2$, in an unknown superposition of states, and a clean ancilla $t$ initialized to $\ket{0}$.  We apply a Toffoli gate: $\text{Toffoli}(c_1, c_2, t)$.

$$
\text{Toffoli}_{c_1, c_2, t} \ket{x}_{c_1} \ket{y}_{c_2} \ket{0}_t = \ket{x}_{c_1} \ket{y}_{c_2} \ket{x \land y}_t
$$

Now, consider the situation conditioned on the target qubit $t$ being measured in the $\ket{1}$ state (i.e. $\langle Z_t \rangle = -1$).  If $t = \ket{1}$, then we know that both $c_1$ and $c_2$ must be in the $\ket{1}$ state.  Even though $c_1$ and $c_2$ might have been in a superposition, within the subspace where $t = \ket{1}$, they behave exactly as if they were clean ancillae initialized to $\ket{1}$.  Therefore, $c_1$ and $c_2$ are conditionally clean, with the condition being $\langle Z_t \rangle = -1$, and any computation that does nothing when $\langle Z_t\rangle \neq -1$ can be compiled as if $c_1$ and $c_2$ happened to be $|1\rangle$.

A particularly interesting property of conditionally clean qubits is that they can catalyze their own production.
\fig{conditionally_clean:sfig1} shows an example of generating conditionally clean ancilla qubits when doing an $\text{AND}$ gate on a clean ancilla as a target.
\fig{conditionally_clean:sfig2} shows an example of consuming the conditionally clean qubits $a$ and $b$ to store results of intermediate computations and further generate conditionally clean ancilla qubits conditioned over multiple control bits. 

This idea of consuming a clean qubit to generate new conditionally clean qubits generalizes beyond the $\text{AND}$ gate.
For example, if you do an equality check of an $n$-bit register with a classical constant and store the result in new clean qubit, then conditioned on the clean qubit being in the $\ket{1}$ state, you know the state of the input register is equal to the classical constant.

In general, Let $f: \{0, 1\}^{n} \rightarrow \{0, 1\}^m$ be any function and let $U_f$ be a unitary that computes this function on a clean ancilla register of size $m$ such that 

$$
    \ket{x}^n \ket{0}^m \xrightarrow{U_f} \ket{x}^n \ket{f(x)}^m
$$

If there exists a pair $(x, f(x))$ such that $x$ is the only element in the domain of $f$ that maps to $f(x)$, then conditioned on the ancilla register being in the state $f(x)$, we know that the system register would be in the state $x$. Thus, we can use the system register as a conditionally clean register that is allocated in a known state $x$, use it as a temporary workspace to store intermediate computation results assuming we will consume those results conditioned on the ancilla being in state $f(x)$, and then uncompute the intermediate computation to restore the state of the system register to be in $x$. The system register in this case acts as a register of ``conditionally clean" qubits.

Thus, there are conditions which must be satisfied for us to be able to use system qubits in our computation as conditionally clean qubits:

\begin{itemize}
    \item We must consume (at least one) clean ancilla qubits to compute a function $f(x)$ that has a unique inverse $x$ such that conditioned on the ancilla register being in the state $f(x)$, we know the system register is in the state $x$ and thus can be used as conditionally clean qubits to store intermediate results of subsequent computations. 
    \item Any subsequent computation that uses the system register $x$ as conditionally clean qubits to store intermediate results must consume these results conditioned on the ancilla register being in the state $f(x)$; i.e. Let $g(y)$ be the result of the computation that uses register $x$ as a temporary workspace then we must consume $f(x) \land g(y)$ as the final output. 
\end{itemize}

In this paper, we will mostly look at utilizing the conditionally clean qubits when we wish to compute a ladder of AND gates. The function $f$ in this case is an $\text{AND}$ operation and the pair $(x=\{1\}^n, f(x)=1)$ satisfies the criteria defined above. Note that $\text{AND}$ operation also distributes nicely such that $f(x, y) = f(x) \land f(y)$ and thus we can compute $f(x)$ on a temporary qubit and use the register $x$ as a register of ``conditionally clean" qubits to store the result of computing $f(y)$, which is then consumed conditioned on the ancilla being in the state $f(x)$ such that the final consumed output is $f(x) \land f(y)$. \fig{conditionally_clean:sfig3} shows how to accumulate the $\text{AND}$ of $n$ ancillae on $2$ qubits by incrementally generating and utilizing conditionally clean ancilla qubits. 

Note that an (obvious yet important) restriction of conditionally clean ancilla qubits is that they cannot be used as temporary workspace for computations that involve the control qubits on which the conditionally clean ancilla is conditioned. For example, in \fig{conditionally_clean:sfig3} step-4 can use only $c0$ as temporary workspace to accumulate controls $c1$ and $c3$ because the remaining conditionally clean ancilla $c2c4c5$ are all conditioned on $c1$ and so cannot be used as temporary workspace for computations involving $c1$. 

\begin{figure}
\begin{adjustwidth}{-2cm}{-2cm}
\begin{subfigure}[]{0.43\linewidth}
  \centering
  \includegraphics[width=1\linewidth]{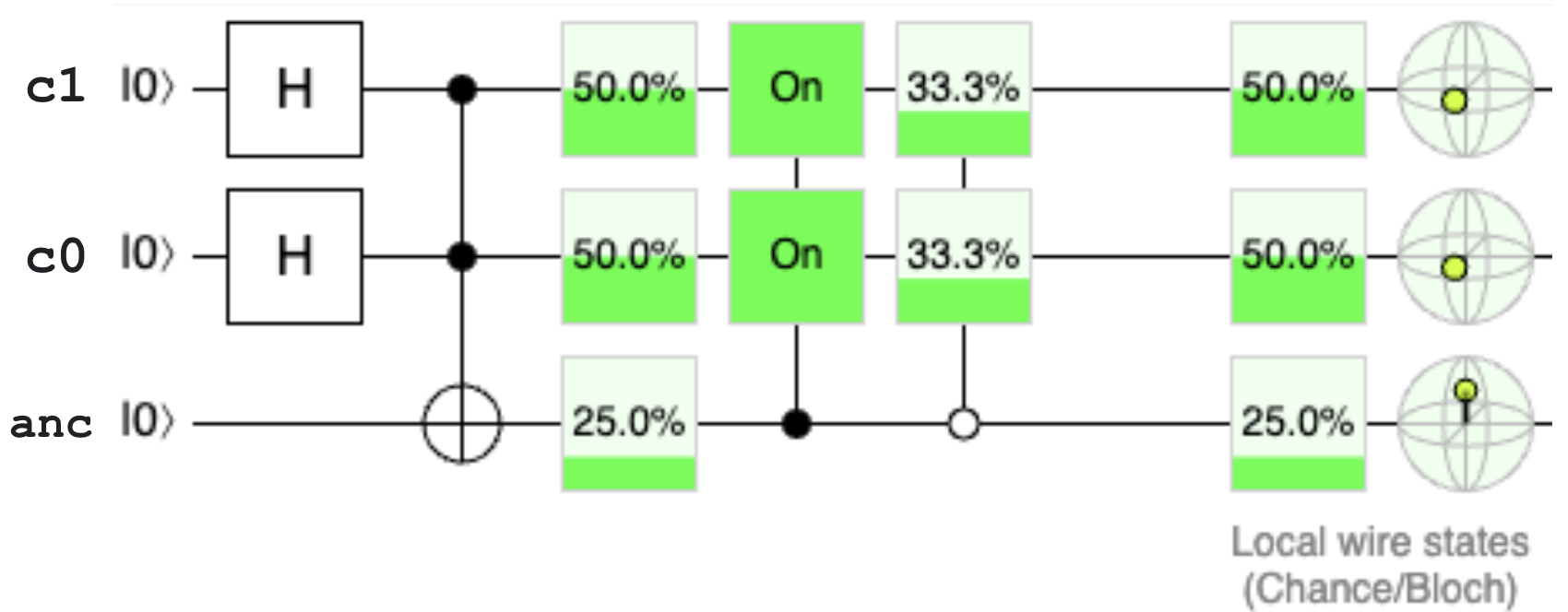}
  \caption{Application of a Toffoli gate on a clean ancilla qubit as the target generates two conditionally clean qubits. The percentage in the green boxes on each wire represent probability of that qubit being in the $\ket{1}$ state.
  Conditioned on the ancilla qubit being in the $\ket{1}$ state, the first two qubits are ``conditionally clean". These conditionally clean ancillae can be consumed from outside the subcircuit in which the qubit is conditioned from. See \fig{conditionally_clean:sfig2} for example. Here is \href{https://algassert.com/quirk\#circuit=\%7B\%22cols\%22\%3A\%5B\%5B\%22H\%22\%2C\%22H\%22\%5D\%2C\%5B\%22\%E2\%80\%A2\%22\%2C\%22\%E2\%80\%A2\%22\%2C\%22X\%22\%5D\%2C\%5B\%22Chance\%22\%2C\%22Chance\%22\%2C\%22Chance\%22\%5D\%2C\%5B\%22Chance\%22\%2C\%22Chance\%22\%2C\%22\%E2\%80\%A2\%22\%5D\%2C\%5B\%22Chance\%22\%2C\%22Chance\%22\%2C\%22\%E2\%97\%A6\%22\%5D\%5D\%7D}{quirk link} for the circuit presented above.}
  \label{fig:conditionally_clean:sfig1}
\end{subfigure}~
\begin{subfigure}[]{0.53\linewidth}
  \centering
  \includegraphics[width=1\linewidth]{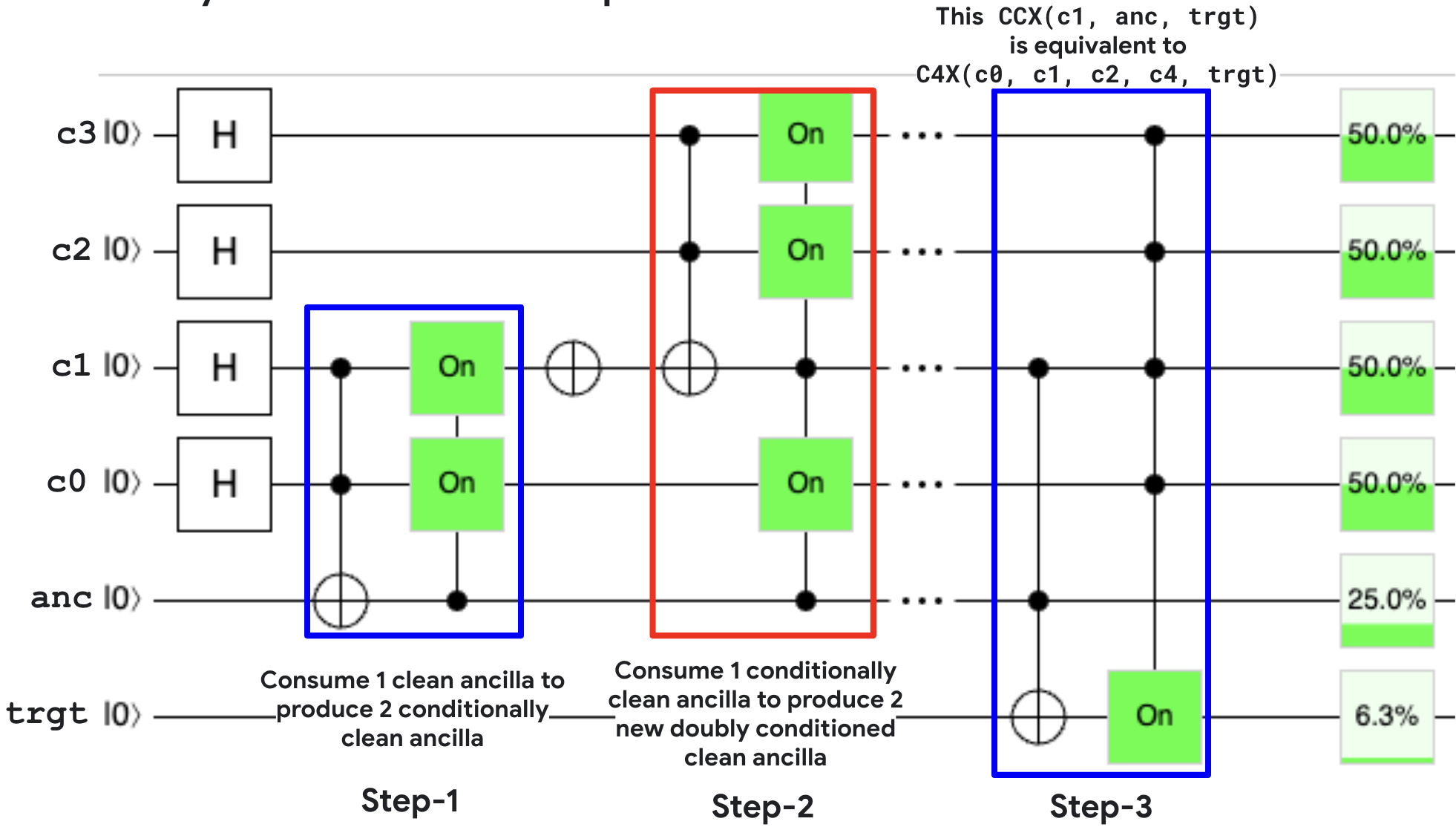}
  \caption{One of the conditionally clean ancilla from Step-1 is consumed to produce two more doubly conditioned clean ancilla in Step-2. Step-3 shows how $CCX(c_1, \text{anc}, \text{trgt})$ is now equivalent to applying $C^4X(c_0, c_1, c_2, c_3, \text{trgt})$ since $\text{anc}$ stores $c_0 \land c_1$ and conditioned on $\text{anc}$ being True, i.e. $c1=\text{True}$, $c1$ stores $c2 \land c3$}
  \label{fig:conditionally_clean:sfig2}
\end{subfigure}
\begin{subfigure}[]{\linewidth}
  \centering
  \includegraphics[width=1\linewidth]{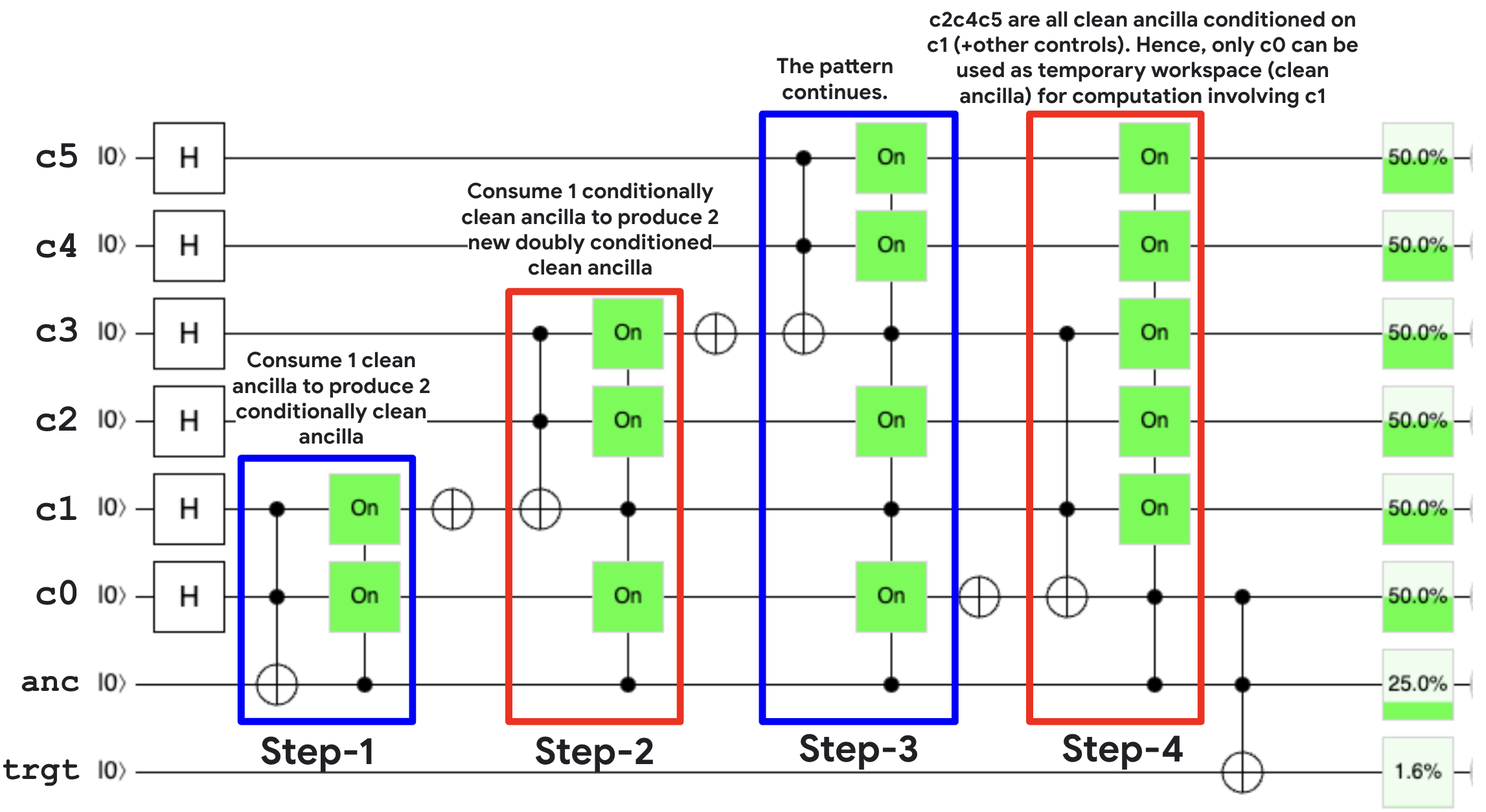}
  \caption{The staggered pattern continues so we accumulate $n$ controls on $n /2$ controls using only 1 ancilla by Step-3. Step-4 shows how conditionally clean ancilla can only be consumed in computations that do not involve the control qubits which the ancilla is conditioned on.  To accumulate the result of open controls $c_3 \land c_1$, the only $c_0$ can be used as a conditionally clean ancilla since $c_2$, $c_4$ and $c_5$ are all conditionally clean conditioned on $c_1$ and hence cannot be used as temporary workspace for computations involving $c_1$. 
  Here is a \href{https://algassert.com/quirk\#circuit=\%7B\%22cols\%22\%3A\%5B\%5B\%22H\%22\%2C\%22H\%22\%2C\%22H\%22\%2C\%22H\%22\%2C\%22H\%22\%2C\%22H\%22\%5D\%2C\%5B1\%2C1\%2C1\%2C1\%2C\%22\%E2\%80\%A2\%22\%2C\%22\%E2\%80\%A2\%22\%2C\%22X\%22\%5D\%2C\%5B1\%2C1\%2C1\%2C1\%2C\%22Chance\%22\%2C\%22Chance\%22\%2C\%22\%E2\%80\%A2\%22\%5D\%2C\%5B1\%2C1\%2C1\%2C1\%2C\%22X\%22\%5D\%2C\%5B1\%2C1\%2C\%22\%E2\%80\%A2\%22\%2C\%22\%E2\%80\%A2\%22\%2C\%22X\%22\%5D\%2C\%5B1\%2C1\%2C\%22Chance\%22\%2C\%22Chance\%22\%2C\%22\%E2\%80\%A2\%22\%2C\%22Chance\%22\%2C\%22\%E2\%80\%A2\%22\%5D\%2C\%5B1\%2C1\%2C\%22X\%22\%5D\%2C\%5B\%22\%E2\%80\%A2\%22\%2C\%22\%E2\%80\%A2\%22\%2C\%22X\%22\%5D\%2C\%5B\%22Chance\%22\%2C\%22Chance\%22\%2C\%22\%E2\%80\%A2\%22\%2C\%22Chance\%22\%2C\%22\%E2\%80\%A2\%22\%2C\%22Chance\%22\%2C\%22\%E2\%80\%A2\%22\%5D\%2C\%5B1\%2C1\%2C1\%2C1\%2C1\%2C\%22X\%22\%5D\%2C\%5B1\%2C1\%2C\%22\%E2\%80\%A2\%22\%2C1\%2C\%22\%E2\%80\%A2\%22\%2C\%22X\%22\%5D\%2C\%5B\%22Chance\%22\%2C\%22Chance\%22\%2C\%22Chance\%22\%2C\%22Chance\%22\%2C\%22Chance\%22\%2C\%22\%E2\%80\%A2\%22\%2C\%22\%E2\%80\%A2\%22\%5D\%2C\%5B1\%2C1\%2C1\%2C1\%2C1\%2C\%22\%E2\%80\%A2\%22\%2C\%22\%E2\%80\%A2\%22\%2C\%22X\%22\%5D\%5D\%7D}{quirk link} for circuit presented above.
  }
  \label{fig:conditionally_clean:sfig3}
\end{subfigure}
\caption{Observation and utilization of ``conditionally clean ancilla" qubits}
\label{fig:conditionally_clean}
\end{adjustwidth}
\end{figure}


\section{Laddered Toggle Detection}\label{sec:laddered_toggle_detection}

When implementing a controlled version of a self inverse operation $U$, one can replace the use of a clean ancilla qubit with a dirty ancilla qubit by repeating the self inverse controlled operation twice. If the control is True and dirty ancilla flips, the self inverse $C-U$ is applied on both the branches $\ket{0}$ and $\ket{1}$ of the dirty ancilla. If the control is False, $C-U$ is applied twice on the $\ket{1}$ branch and zero times on the $\ket{0}$ branch, effectively applying an identity operation. This well known trick of substituting a dirty ancilla in the place of a clean ancilla is commonly~\cite{gidney2015blog, nie2024quantum} called ``toggle detection" and we show how it works in \fig{toggle_detection:sfig1}.

If the implementation of $C-U$ further borrows a clean ancilla qubit which we wish to replace with a dirty ancilla qubit, then a naive strategy would be to recursively repeat the toggle detection strategy described above. Every time we apply toggle detection, we incur a 2x overhead since we need to apply $C-U$ twice. Thus, if we wish to replace $n$ clean ancillae with $n$ dirty ancillae, we can end up incurring an overhead that scales exponentially as $2^n$.

We introduce Laddered Toggle Detection, a technique that avoids this exponential overhead.  The key observation is that if the subsequently borrowed dirty ancillae are different from the control qubits used in the initial toggle detection, these borrowed ancillae can be treated as if they were clean. \fig{toggle_detection:sfig2} demonstrates this principle and \fig{toggle_detection:sfig3} shows an application of this technique to derive a new construction for decomposing $n$-bit Toffoli into $4n-8$ Toffoli using $n - 2$ borrowed qubits.

We note that all our circuit constructions presented below for the clean ancilla case can be updated to use dirty ancilla with a 2x Toffoli overhead using Laddered Toggle Detection. 

\begin{figure}
\begin{adjustwidth}{-2cm}{-2cm}
\begin{subfigure}{\linewidth}
  \centering
  \includegraphics[width=\linewidth]{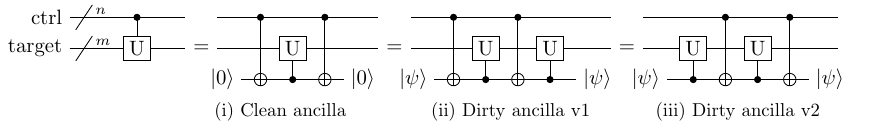}
  \caption{Toggle detection to substitute a single clean ancilla with a dirty ancilla.}
  \label{fig:toggle_detection:sfig1}
\end{subfigure}
\par\bigskip
\begin{subfigure}{\linewidth}
  \centering
  \includegraphics[width=\linewidth]{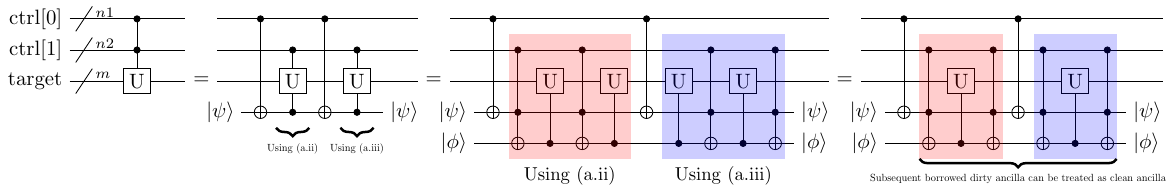}
  \caption{Subsequent dirty qubits can be treated as clean qubits since you get toggle detection for free. Note that in addition to $U^2=I$, another constraint for this to work is that the decomposition of $CU$ should not use the ctrl[0] qubits from top as dirty ancillae.}
  \label{fig:toggle_detection:sfig2}
\end{subfigure}
\par\bigskip
\begin{subfigure}{\linewidth}
  \centering
  \includegraphics[width=\linewidth]{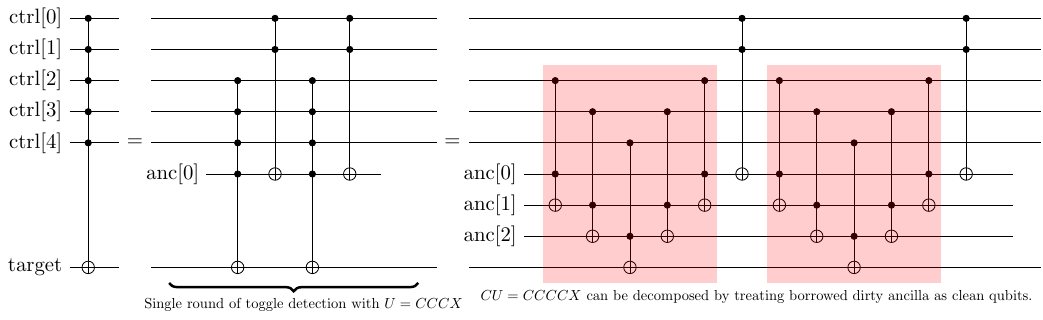}
  \caption{An example of utilizing the laddered toggle detection trick to decompose $n$-bit Toffoli into $4n - 8$ Toffoli using $n - 2$ borrowed ancilla qubits. Note that this circuit has the same cost as the ``arrow pointing away from the target" construction in \cite{gidney2015blog} but utilizes the toggle detection trick so the ``arrow is pointing towards the target" now.
  Here is \href{https://algassert.com/quirk\#circuit=\%7B\%22cols\%22\%3A\%5B\%5B\%22H\%22\%2C\%22H\%22\%2C\%22H\%22\%2C\%22H\%22\%2C\%22H\%22\%2C\%22Counting3\%22\%5D\%2C\%5B\%22\%E2\%80\%A6\%22\%2C\%22\%E2\%80\%A6\%22\%2C\%22\%E2\%80\%A6\%22\%2C\%22\%E2\%80\%A6\%22\%2C\%22\%E2\%80\%A6\%22\%2C\%22\%E2\%80\%A6\%22\%2C\%22\%E2\%80\%A6\%22\%5D\%2C\%5B1\%2C1\%2C\%22\%E2\%80\%A2\%22\%2C1\%2C1\%2C\%22\%E2\%80\%A2\%22\%2C\%22X\%22\%5D\%2C\%5B1\%2C1\%2C1\%2C\%22\%E2\%80\%A2\%22\%2C1\%2C1\%2C\%22\%E2\%80\%A2\%22\%2C\%22X\%22\%5D\%2C\%5B1\%2C1\%2C1\%2C1\%2C\%22\%E2\%80\%A2\%22\%2C1\%2C1\%2C\%22\%E2\%80\%A2\%22\%2C\%22X\%22\%5D\%2C\%5B1\%2C1\%2C1\%2C\%22\%E2\%80\%A2\%22\%2C1\%2C1\%2C\%22\%E2\%80\%A2\%22\%2C\%22X\%22\%5D\%2C\%5B1\%2C1\%2C\%22\%E2\%80\%A2\%22\%2C1\%2C1\%2C\%22\%E2\%80\%A2\%22\%2C\%22X\%22\%5D\%2C\%5B\%22\%E2\%80\%A2\%22\%2C\%22\%E2\%80\%A2\%22\%2C1\%2C1\%2C1\%2C\%22X\%22\%5D\%2C\%5B1\%2C1\%2C\%22\%E2\%80\%A2\%22\%2C1\%2C1\%2C\%22\%E2\%80\%A2\%22\%2C\%22X\%22\%5D\%2C\%5B1\%2C1\%2C1\%2C\%22\%E2\%80\%A2\%22\%2C1\%2C1\%2C\%22\%E2\%80\%A2\%22\%2C\%22X\%22\%5D\%2C\%5B1\%2C1\%2C1\%2C1\%2C\%22\%E2\%80\%A2\%22\%2C1\%2C1\%2C\%22\%E2\%80\%A2\%22\%2C\%22X\%22\%5D\%2C\%5B1\%2C1\%2C1\%2C\%22\%E2\%80\%A2\%22\%2C1\%2C1\%2C\%22\%E2\%80\%A2\%22\%2C\%22X\%22\%5D\%2C\%5B1\%2C1\%2C\%22\%E2\%80\%A2\%22\%2C1\%2C1\%2C\%22\%E2\%80\%A2\%22\%2C\%22X\%22\%5D\%2C\%5B\%22\%E2\%80\%A2\%22\%2C\%22\%E2\%80\%A2\%22\%2C1\%2C1\%2C1\%2C\%22X\%22\%5D\%2C\%5B\%22\%E2\%80\%A6\%22\%2C\%22\%E2\%80\%A6\%22\%2C\%22\%E2\%80\%A6\%22\%2C\%22\%E2\%80\%A6\%22\%2C\%22\%E2\%80\%A6\%22\%2C\%22\%E2\%80\%A6\%22\%2C\%22\%E2\%80\%A6\%22\%5D\%2C\%5B\%22\%E2\%80\%A2\%22\%2C\%22\%E2\%80\%A2\%22\%2C\%22\%E2\%80\%A2\%22\%2C\%22\%E2\%80\%A2\%22\%2C\%22\%E2\%80\%A2\%22\%2C1\%2C1\%2C1\%2C\%22X\%22\%5D\%2C\%5B\%22H\%22\%2C\%22H\%22\%2C\%22H\%22\%2C\%22H\%22\%2C\%22H\%22\%2C\%22Uncounting3\%22\%5D\%5D\%7D}{quirk link} for the construction presented above.
  }
  \label{fig:toggle_detection:sfig3}
\end{subfigure}
\caption{Laddered Toggle detection can be used to substitute multiple clean ancillae with dirty ancillae with a single round of toggle detection. $U^2 = I$}
\label{fig:toggle_detection}
\end{adjustwidth}
\end{figure}

\section{Application of conditionally clean ancilla to \texorpdfstring{$n$}{n}-bit Toffoli circuits}\label{sec:mcx_constructions}

In this section, we will present a number of new circuit constructions for decomposing $n$-bit Toffoli into Toffoli gates using conditionally clean qubits. We first reduce the problem of constructing circuit decomposition assisted by conditionally clean ancilla into the following abstract computer science problem and show that each strategy for solving the question listed below can lead to a circuit construction for decomposing $n$-bit Toffoli gates. 

\subsection{An abstract problem for \texorpdfstring{$n$}{n}-bit Toffoli Decomposition}

\newtheorem{problem}{Problem}
\begin{problem}[Ancilla State Management] \label{problem:problem_1}
Given an array $A$ with $n+1$ elements, such that $A = [1, 0, 0, ..., 0]$ i.e. initially $A[0]= 1$ and $A[i] = 0$ for $0 < i \leq n$. In each step, we can perform the following operation:
\begin{itemize}
    \item Choose indices $t$, $x$, $y$ such that $t < x < y$ and $A[t] = 1$, $A[x] = A[y] = 0$
    \item Flip the values of $A[x], A[y], A[t]$; i.e. set $A[t] = 0$, $A[x] = A[y] = 1$
\end{itemize}
Our objective is to perform a sequence of operations to minimize the number of unmarked elements ($0$'s) in the array. We can characterize any valid scheme of performing the operations using the following 3 parameters -
\begin{itemize}
    \item $K$: the number of unmarked elements (i.e. $i$ such that $A[i] = 0$) at the end of the procedure
    \item $T$: the number of operations used
    \item $D$: the depth of the sequence of chosen operations, where two operations can be performed in parallel if their $(x, y, t)$ tuples are disjoint
\end{itemize}
\end{problem}

\begin{theorem}
Each solution to \problemref{problem_1} defined above can be mapped to a circuit decomposition for accumulating the AND of $n$-qubits into $K$ qubits using exactly $T$ Toffoli gates, Toffoli depth of $D$ and $1$ clean ancilla qubit
\end{theorem}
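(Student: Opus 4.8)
The plan is to give an explicit, operation-by-operation translation of a valid scheme for Question~\ref{question_1} into a Toffoli circuit and then verify it with a single maintained invariant. First I fix the dictionary: array index $0$ is the lone clean ancilla (initialised to $\ket{0}$), while indices $1,\dots,n$ are the $n$ input controls whose conjunction we want. The starting array $[1,0,\dots,0]$ then says precisely that the ancilla is the only qubit in a known state. An array step on a triple $(t,x,y)$ becomes a single Toffoli whose target is qubit $t$ and whose two controls are qubits $x,y$; since the abstract problem does not fix control polarities, I am free to choose each control open or closed, which leaves the Toffoli count untouched. This already settles the bookkeeping: $T$ steps give $T$ Toffolis; disjoint triples act on disjoint qubits, so parallel steps can be applied simultaneously and the Toffoli-depth is exactly $D$; and only one genuine clean ancilla (index $0$) is ever allocated.

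The core is a correctness invariant. Writing $Z=\{i:A[i]=0\}$, I maintain three clauses: (i) the subsets $S_i\subseteq\{1,\dots,n\}$ that I attach to the positions of $Z$ partition the control indices; (ii) the global identity
$$\bigwedge_{j=1}^{n} c_j \;=\; \bigwedge_{i\in Z}\ell_i$$
holds for \emph{every} input, where each $\ell_i$ is a tracked literal on qubit $i$; and (iii) on the branch selecting this conjunction, every one-position qubit ($i\notin Z$) sits in a definite known value. Initially $Z=\{1,\dots,n\}$, $S_i=\{i\}$, $\ell_i=c_i$, and all three clauses are immediate. For the inductive step I analyse the translated Toffoli for $(t,x,y)$: the controls $x,y\in Z$ carry literals $\ell_x,\ell_y$, and I pick their polarities so the gate fires exactly when $\ell_x=\ell_y=1$, sending $q_t\mapsto q_t\oplus(\ell_x\wedge\ell_y)$ and setting $S_t:=S_x\cup S_y$. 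Because $t$ is a one-position, clause~(iii) says $q_t$ held a known value there, so the updated $q_t$ defines a literal $\ell_t$ (open or closed as dictated by that known value) with $\ell_t=\ell_x\wedge\ell_y$ on the selecting branch; replacing the two conjuncts $\ell_x,\ell_y$ by the single conjunct $\ell_t$ therefore preserves (ii), while (i) is preserved since $S_t=S_x\cup S_y$. Finally $x,y$ leave $Z$ into a known state on exactly the branch where $\ell_t$ fires, re-establishing (iii) for them.

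The step I expect to be the main obstacle is clause~(iii): a conditionally clean qubit is only in a known state on a branch defined by the gates that created it, so I must rule out circular conditioning across the whole schedule. Here the ordering constraint $t<x<y$ does the work. Every translated gate sends its conditioning from the lower-indexed target to the higher-indexed controls, so the ``is-conditioned-on'' relation has edges that strictly increase the index and is therefore acyclic; this acyclicity is the array-level incarnation of the paper's restriction that a conditionally clean qubit is never used as workspace for a gate involving the controls it is conditioned on. Linearly ordering the conditionings by index, I can then show inductively that the branch selecting the global conjunction implies every ancestor conditioning, so each qubit used as a target genuinely carries its recorded known value and clause~(iii) is self-consistent. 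Concluding, after all $T$ steps the surviving zero-positions form a set of size $K=n-T$ whose literals conjoin to $\bigwedge_j c_j$: the AND of the $n$ controls has been accumulated onto $K$ qubits using exactly $T$ Toffolis in Toffoli-depth $D$ with a single clean ancilla, as claimed.
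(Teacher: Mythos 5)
Your dictionary and bookkeeping coincide with the paper's (index $0$ is the clean ancilla, indices $1,\dots,n$ are the controls, each triple becomes one Toffoli, giving exactly $T$ Toffolis, depth $D$, and $K=n-T$ surviving positions), and tracking polarities with open/closed controls instead of the paper's explicit $X$-gate corrections on conditionally clean targets is a harmless variation. The problem is that your induction does not close, and it fails at exactly the point the theorem is about. Clause~(ii) must hold as an identity of Boolean functions on \emph{every} input --- that is what makes the accumulated AND usable, since a gate controlled on the $K$ surviving literals must \emph{not} fire when $\bigwedge_j c_j=0$. But your justification of the step for the triple $(t,x,y)$ only establishes $\ell_t=\ell_x\land\ell_y$ \emph{on the selecting branch}, because clause~(iii) records known values only there. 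Off that branch you need the new product to vanish: whenever every literal on $Z\setminus\{x,y\}$ is satisfied but $\ell_x\land\ell_y=0$, you need the target's prior value $q_t$ to still equal its recorded value $v_t$, so that $\ell_t=q_t\oplus v_t\oplus(\ell_x\land\ell_y)=0$. Those inputs satisfy $\bigwedge_j c_j=0$, so clause~(iii) says nothing about them; indeed nothing in clauses~(i)--(iii) constrains $q_t$ there at all (clause~(ii) only speaks about zero-positions, and $t\notin Z$). Hence the stated invariant is not self-propagating: it cannot rule out $\ell_t=1$ with all other factors equal to $1$ while $\bigwedge_j c_j=0$, which would break~(ii). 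Your acyclicity paragraph does not repair this, because you deploy it only to show that the \emph{global} selecting branch implies every ancestor conditioning --- again a statement about the selecting branch only.

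The missing ingredient is a strengthening of~(iii) to a local statement, and this is where $t<x<y$ genuinely does its work: every one-position $i$ holds its recorded value $v_i$ on every input where all current literals on zero-positions with index smaller than $i$ are satisfied (not merely on the global branch). This is precisely the paper's informal invariant that ``a conditionally clean qubit $t$ is clean conditioned on a set of control qubits to the left of it.'' With it, both cases of the step for~(ii) go through: since $t<x<y$, the factors of the new product other than $\ell_t$ include every zero-position to the left of $t$, so whenever they all equal $1$ you get $q_t=v_t$, hence $\ell_t=\ell_x\land\ell_y$, and the new and old products agree on all inputs; the strengthened~(iii) is then itself preserved for $x$, $y$, and all pre-existing one-positions by the same comparison of left-products before and after the step. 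So your construction is correct and your route is the paper's route, but the invariant you chose to carry is too weak to support the conclusion you draw from it; conditioning on the left-literals rather than on the global conjunction is the idea your write-up is missing.
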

\begin{proof}
    We can map a sequence of operations satisfying the constraints to a circuit decomposition as follows:
    \begin{itemize}
        \item For $n$-bit Toffoli, each of the $n$ system qubits can be mapped to indices $i=1..n$ in the array $A$.
        \item The index $i=0$, which is initially marked, corresponds to the 1 clean ancilla qubit required by the decomposition.
        \item During the course of the procedure, each index $i$ such that $A[i] = 1$ corresponds to a conditionally clean ancilla qubit which can be consumed as a resource to accumulate the AND of qubits on the right of it.
        \item Each operation on a tuple $(x, y, t)$ can be replaced with a gate sequence - [$\text{Toffoli}(x, y, t), \text{X}(t)$] if $t$ is a conditionally clean qubit (i.e. $t > 0$). If $t$ is a clean qubit (i.e. $t=0$), then we can use an $\text{AND}(x, y, t)$ gate instead. Thus, each operation computes $x \land y$ and saves the result on the conditionally clean qubit $t$. Thus, after the operation is performed, $x$, $y$ are available as conditionally clean qubits and $t$ is flipped from a conditionally clean qubit to a system qubit. 
        \item The constraint that for each operation $t < x < y$ ensures that at any point in time, a conditionally clean qubit $t$ is clean conditioned on a set of control qubits to the left of it and is used as temporary workspace to store results of computations for qubits on the right of it. This invariant ensures that property for conditional cleanliness is always satisfied. 
    \end{itemize}
    
\end{proof}

\subsection{\texorpdfstring{$n$}{n}-bit Toffoli into \texorpdfstring{\mcxoptcomplexityclean}{2n-3} Toffoli and \texorpdfstring{$O(n)$}{O(n)} depth using 1 clean ancilla}\label{sec:one_clean_n_depth}

One solution to \problemref{problem_1} can be obtained via a greedy strategy where in each step, you pick the rightmost marked index $t$ such that $A[t]=1$ and there are at least two indices $x$ and $y$ such that $t < x < y$ and $A[x]=A[y]=0$. Pick the leftmost such pair of $x$ and $y$ and apply the operation on the tuple $(x, y, t)$. 

This greedy procedure gives us a solution with $K=2, T=n-2, D=n-2$ and we can map it to a circuit for decomposing $n$-qubit Toffoli into \mcxoptcomplexityclean Toffoli and $O(n)$ depth using 1 clean ancilla as shown in \fig{mcx_one_clean_n_depth}

\begin{figure}
\begin{adjustwidth}{-2cm}{-2cm}
\begin{subfigure}{\linewidth}
\includegraphics[width=1\linewidth]{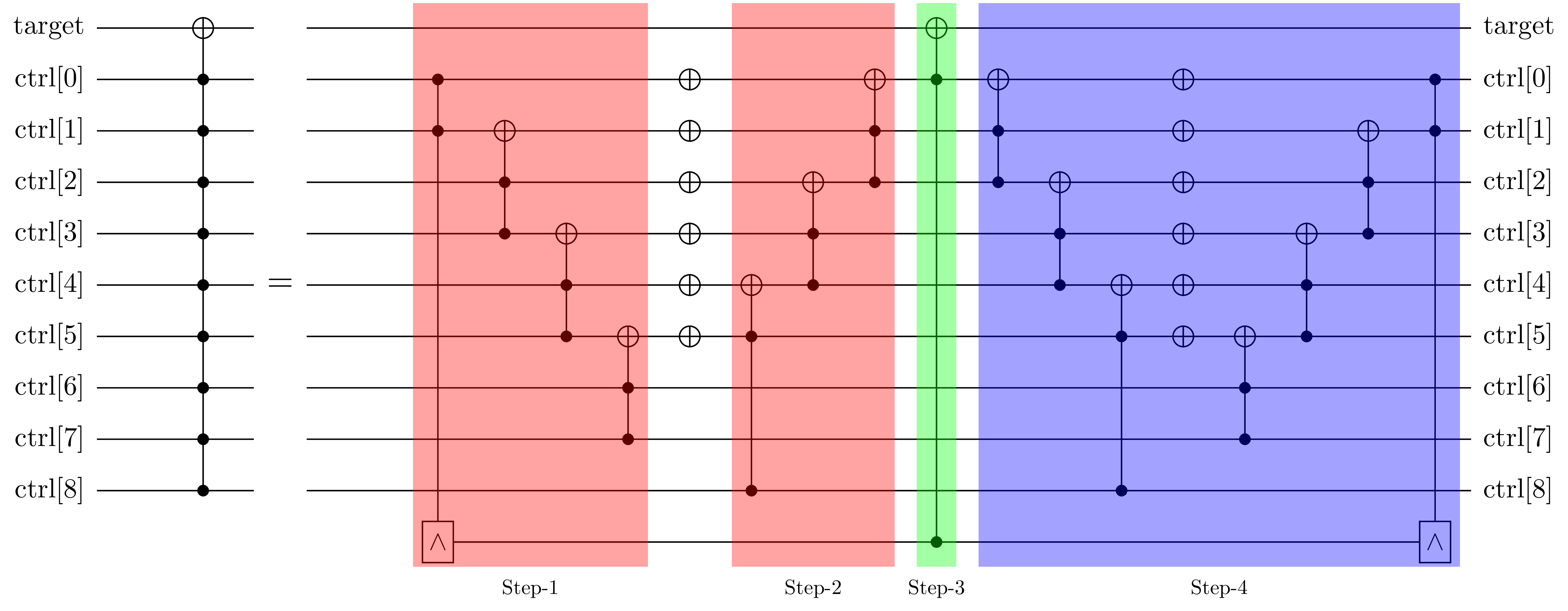}
\caption{Step-1 is a ``up" Toffoli ladder that uses $\lfloor \frac{n}{2} \rfloor$ Toffolis to accumulate the $n$ control values on $\lfloor \frac{n}{2} \rfloor$ controls using conditionally clean ancillae.  Step-2 then uses a ``down" Toffoli ladder with $n - \lfloor n / 2 \rfloor - 2$ Toffolis to accumulate the AND of all controls on conditionally clean ancilla ctrl[0]. Step-3 then uses 1 Toffoli, which doesn't appear in a compute/uncompute pair, to apply the n-qubit $C^nX$ gate on the target qubit. Step-4 uncomputes the ``up" and ``down" Toffoli ladders to return the clean ancilla and intermediate conditional ancilla to their original state. Note that the uncomputation takes 1 less Toffoli since Toffoli on clean ancilla (AND gate) can be uncomputed using measurement based uncomputation and only clifford gates. \cite{Gidney2018halvingcostof}
}
\label{fig:mcx_one_clean_n_depth:sfig1}
\end{subfigure}
\begin{subfigure}{\linewidth}
\includegraphics[width=1\linewidth]{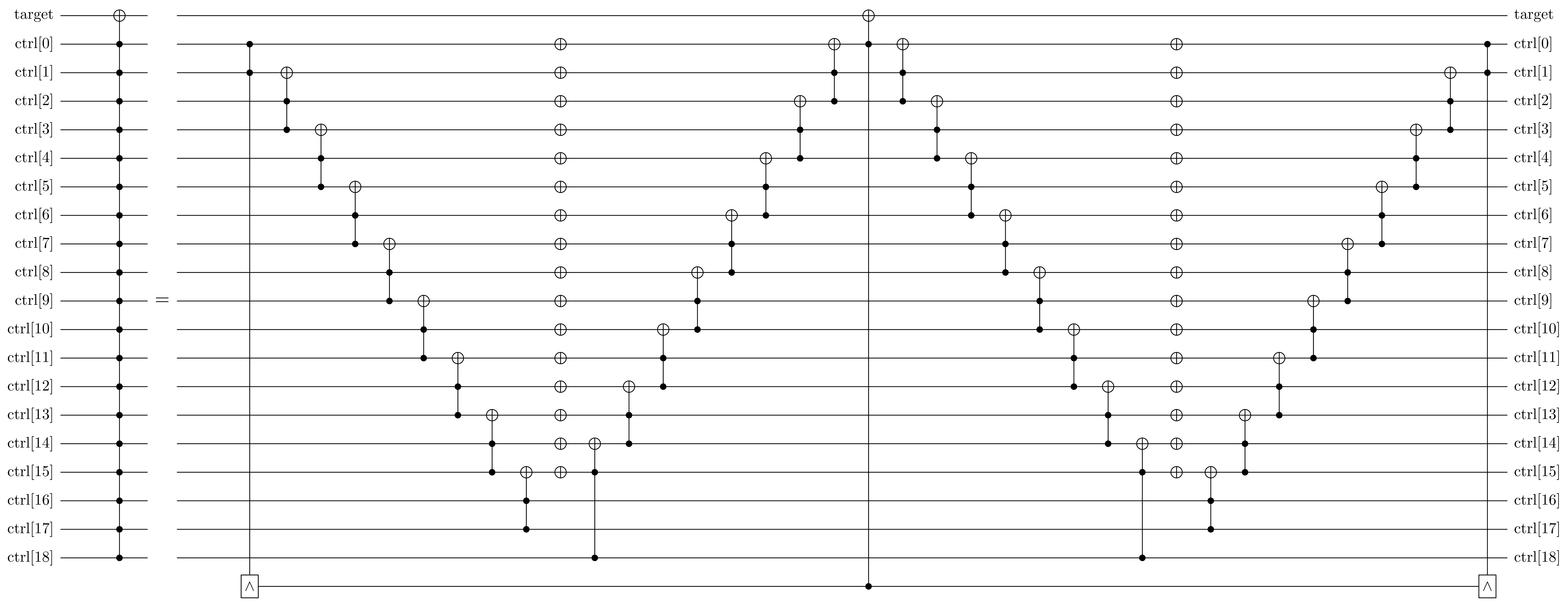}
\caption{Example circuit for decomposing a 19-bit Toffoli into 33 Toffoli + 1 $\text{And}/\text{And}^{\dagger}$ pair using 1 clean ancilla.}
\label{fig:mcx_one_clean_n_depth:sfig2}
\end{subfigure}
\caption{Decomposition of $n$-qubit Toffoli into \mcxoptcomplexityclean Toffoli using 1 clean ancilla.}
\label{fig:mcx_one_clean_n_depth}
\end{adjustwidth}
\end{figure}

\subsection{\texorpdfstring{$n$}{n}-bit Toffoli into \texorpdfstring{\mcxoptcomplexityclean}{2n - 3} Toffoli and \texorpdfstring{$\mathcal{O}(\log{n})$}{O(log(n))} depth using 2 clean ancilla} \label{sec:two_clean_logn_depth}

Another solution to \problemref{problem_1} that minimizes depth can be described as follows:
\begin{itemize}
    \item At the $i$'th timestep, we maintain the invariant that $A$ has leftmost $i$ elements in the $0$ state, the next $2^{i}$ elements in the $1$ state and all the remaining elements in the 0 state. Thus, the procedure ends in $\log{n}$ such timesteps and $K=\log{n}$ unmarked elements remain in the array at the end of the procedure. 
    For example, over the course of the algorithm, the array $A$ for $n=36$ would look like:
        \begin{itemize}
            \item At $i=0$, $A = 1000000000000000000000000000000000000$
            \item At $i=1$, $A = 0110000000000000000000000000000000000$
            \item At $i=2$, $A = 0011110000000000000000000000000000000$
            \item At $i=3$, $A = 0001111111100000000000000000000000000$
            \item At $i=4$, $A = 0000111111111111111100000000000000000$
            \item At $i=5$, $A = 0000011111111111111111111111111111111$
        \end{itemize}
    \item In the $i$'th timestep, we utilize the leftmost one of the $2^{i}$ marked elements as the target and remaining $2^{i} - 1$ marked elements as a temporary workspace to flip the next $2^{i} + 1$ elements using $2^{i}$ operations.  
    Thus, we end up with $(2^{i} - 1) + (2^{i} + 1) = 2^{i+1}$ marked elements total and the desired invariant is satisfied. 
    The operations required to achieve this transformation can be described as a sequence of $i+1$ steps as follows:
    \begin{itemize}
        \item         
        Let $P_{j}$ and $Q_{j}$ be the set of indices that correspond to the available temporary workspace and the set of unmarked elements in the $j$'th step, where $j = 0, 1, \dots i$. 
        Thus, $P_{0} = \{i, i + 1, \dots i + 2^{i} - 1\}$ and $Q_{0} = \{i + 2^{i}, i + 2^{i} + 1, \dots i + 2^{i + 1} \}$ and we want $P_{i+1} = \{\}$ and $Q_{i+1} = \{i\}$.
        \item In the $j$'th step, let $P_{j} = \{p_{0}, p_{1}, \dots, p_{2^{i - j} - 1}\}$ and $Q_{j} = \{q_0, q_1, \dots, q_{2^{i - j}}\}$. 
        To transition to $P_{j+1}$ and $Q_{j+1}$, for $0 \leq j < i$, we apply $2^{i-j-1}$ operations of the form $(t_{k}, x_{k}, y_{k})$ where $t_{k} = p_{2^{i - j} - 1 - k}, x_{k}=q_{2k+1}, y_{k}=q_{2k+2}$ and update $Q_{j + 1} = Q_{j} \setminus \{x_{k}, y_{k}\} \cup \{ t_{k} \}$ and $P_{j + 1} = P_{j} \setminus \{t_{k}\}$ where $k = 0, 1, \dots, 2^{i-j-1} - 1$.
        For $j = i$, the transition simply involves 1 operation of the form $(p_0, q_0, q_1)$.
        \item Thus, a total of $1 + \sum_{j=0}^{i-1}2^{i-j-1} = 2^{i}$ operations are applied in the $i$'th step. 
    \end{itemize}
\end{itemize}

This procedure gives us a solution with $K=\log{n}$, $T=n-\log{n}$, $D=\log{n}$. In order to map it to a circuit for decomposing $n$-qubit Toffoli into \mcxoptcomplexityclean Toffoli and $O(\log{n})$ depth, we can recursively invoke the linear depth procedure from \sec{one_clean_n_depth} and thus obtain a $O(\log{n})$ depth decomposition using 2 clean ancilla as shown in \fig{mcx_two_clean_logn_depth}

\begin{figure}
\begin{adjustwidth}{-2cm}{-2cm}
\begin{subfigure}{0.4\linewidth}
\includegraphics[width=\linewidth]{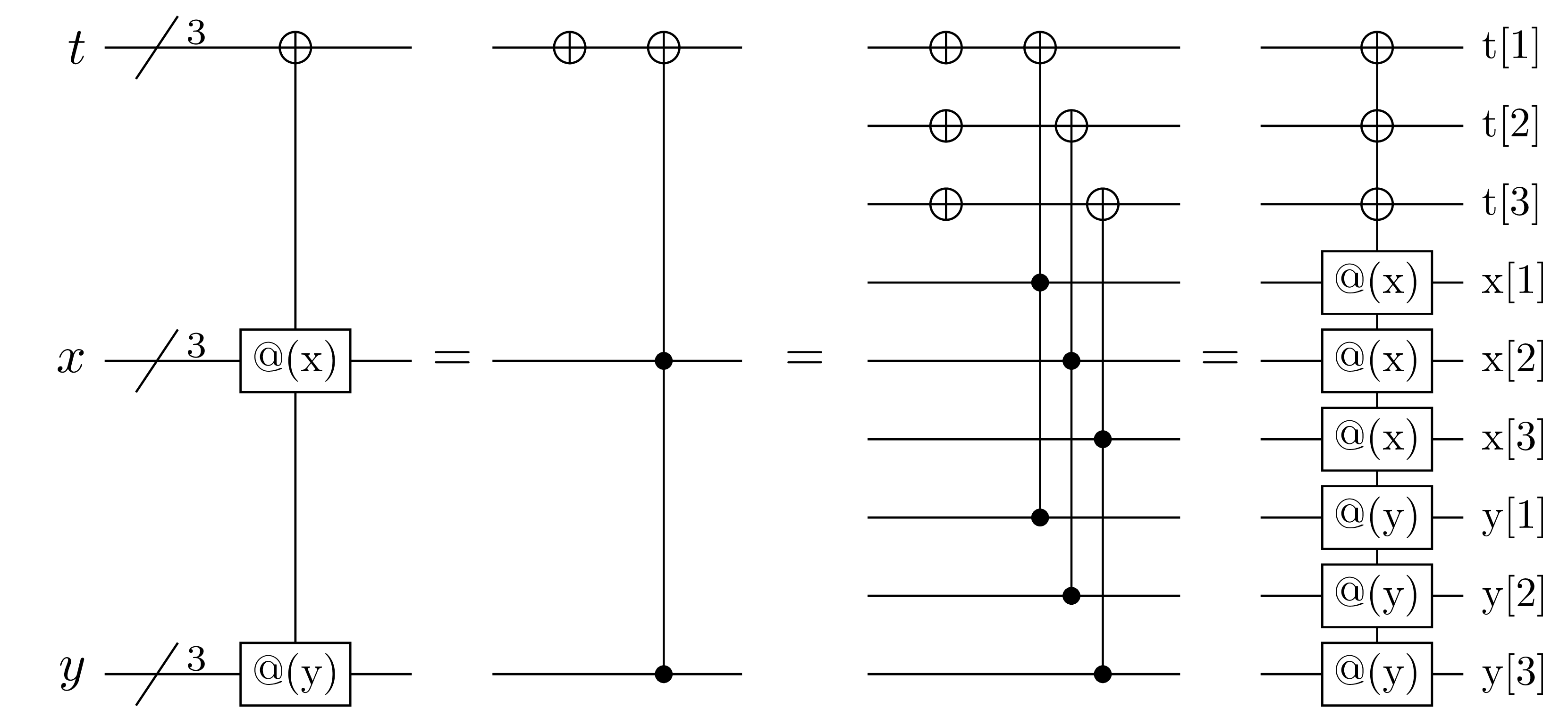}
\caption{A circuit primitive used in our construction which first toggles the $n$ conditionally clean ancilla qubits and then applies $n$ Toffoli gates in parallel. The depth of this primitive is 2.}
\label{fig:mcx_two_clean_logn_depth:sfig1}
\end{subfigure}~
\begin{subfigure}{0.58\linewidth}
\includegraphics[width=\linewidth]{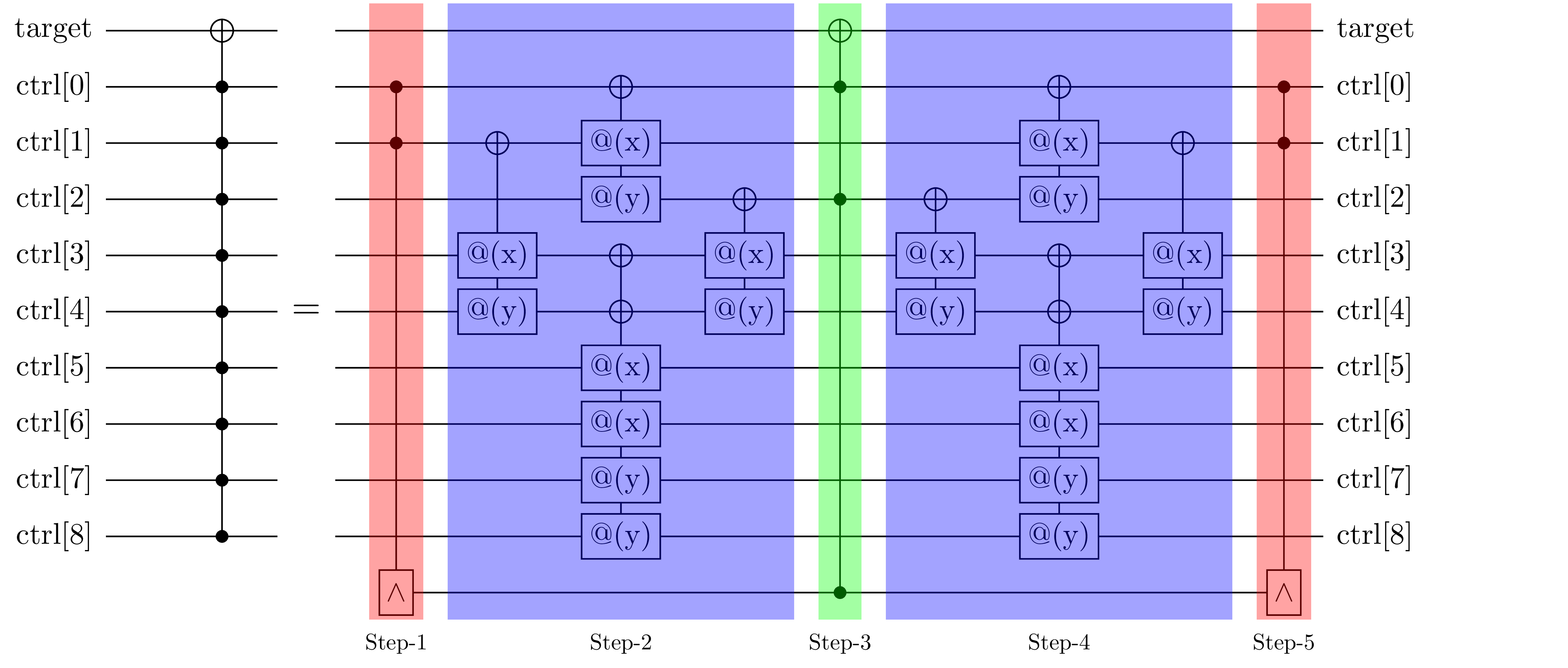}
\caption{Step-1 uses the one clean ancilla to generate two conditionally clean ancillae. Step-2 uses a log-depth ladder of $n - \log{n} - 2$ Toffolis to accumulate the $\text{AND}$ of the remaining $n - 2$ controls on $\log{n}$ qubits. Step-3 uses a $(\log{n} + 1)$-bit Toffoli to accumulate the $\text{AND}$ of all $n$ bits on the target qubit. Step-4 and 5 uncompute Step-2 and 1 respectively. Step-3 can be implemented by recursively invoking a linear depth construction using 1 additional clean ancilla from \sec{one_clean_n_depth}}
\label{fig:mcx_two_clean_logn_depth:sfig2}
\end{subfigure}
\begin{subfigure}{\linewidth}
\includegraphics[width=\linewidth, center]{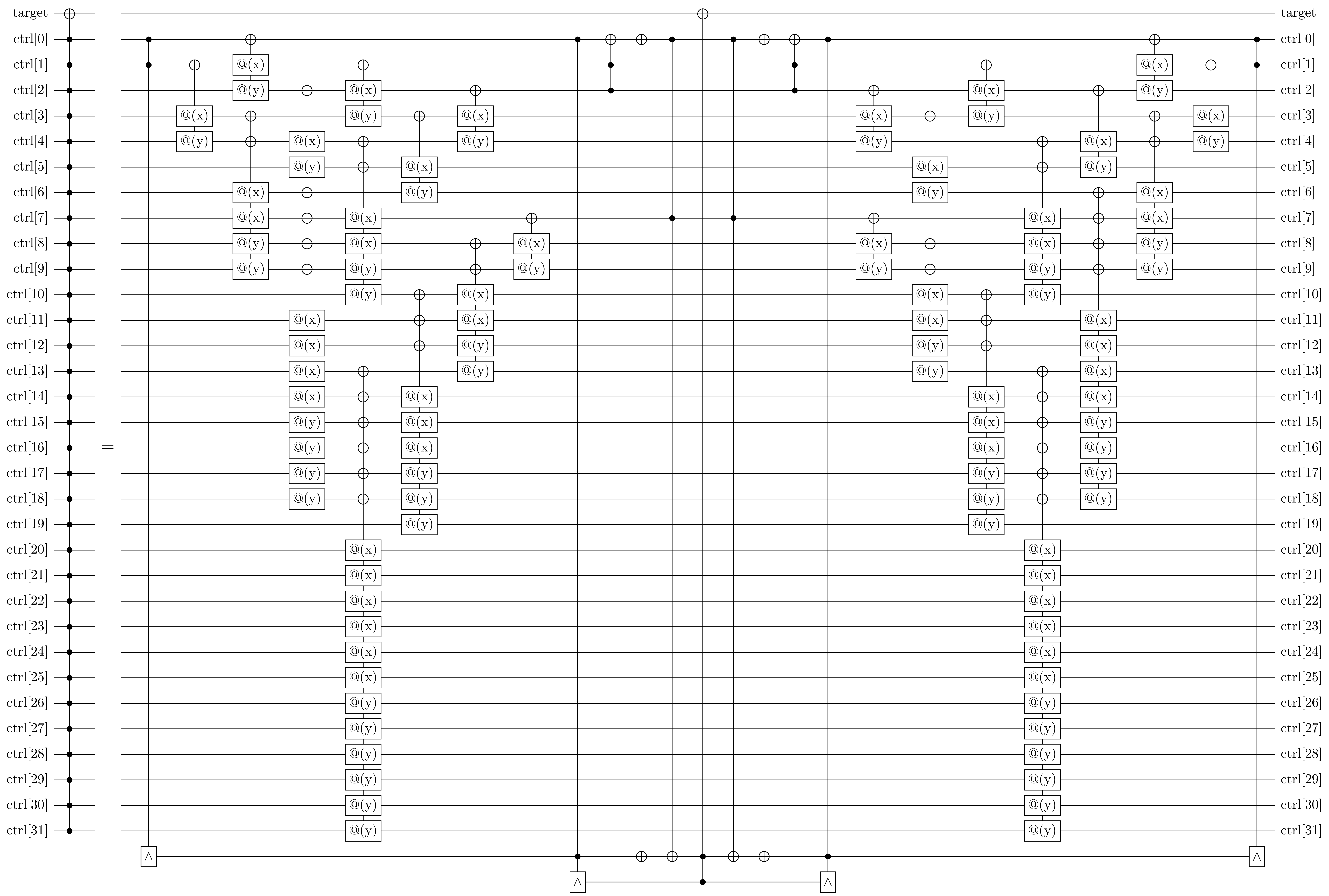}
\caption{Decomposition of $32$-bit Toffoli into $61$ Toffolis and $\mathcal{O}(\log_2{n})$ depth using 2 clean ancilla.}
\label{fig:mcx_two_clean_logn_depth:sfig3}
\end{subfigure}
\caption{Decomposition of $n$-qubit Toffoli into $2n - 3$ Toffoli and $O(\log{n})$ depth using $2$ clean ancilla. Note that the $2$ Toffolis acting on clean ancilla qubits are replaced with $2$ $\text{AND} / \text{AND}^{\dagger}$ pairs.}
\label{fig:mcx_two_clean_logn_depth}
\end{adjustwidth}
\end{figure}

\subsection{\texorpdfstring{$n$}{n}-bit Toffoli into \texorpdfstring{\mcxoptcomplexitydirty}{4n-8} Toffoli and \texorpdfstring{$O(n)$}{O(n)} depth using 1 dirty ancilla} \label{sec:one_dirty_n_depth}

This construction can be understood as a modification of the strategy described in \sec{one_clean_n_depth}, adapted to use a dirty ancilla instead of a clean one. We still follow the same general approach of accumulating ANDs using a Toffoli ladder, but we leverage the toggle detection trick (described in \sec{laddered_toggle_detection}) to handle the unknown initial state of the dirty ancilla. The key observation is that the ancilla qubit used during the clean ancilla decomposition described in \sec{one_clean_n_depth} only becomes the target of a single $CCX$ gate and stores $c_0 \land c_1$ for the first 2 controls. Thus, we can use the toggle detection trick from \sec{laddered_toggle_detection} to repeat the Toffoli ``up" and ``down" ladders twice to get a construction where the ancilla qubit can be a dirty ancilla qubit. This is shown in \fig{one_dirty_n_depth}.

\begin{figure}
\begin{adjustwidth}{-2cm}{-2cm}
\begin{subfigure}{\linewidth}
\includegraphics[width=\linewidth]{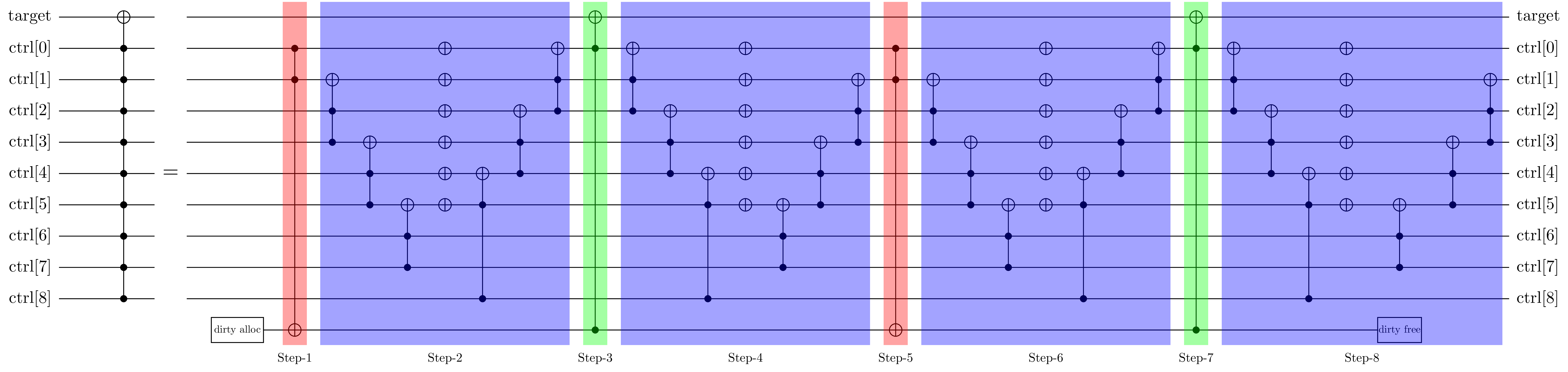}
\caption{Step 1 and 5 are applications of $CCX$ on dirty ancilla as target for toggle detection. The ``up" and ``down" Toffoli ladders in between at steps 2, 4, 6, 8 are responsible to compute/uncompute the AND of all remaining controls on conditionally clean ctrl[0]. Step-3 and 7 flip the target when all accumulated control bits are ON. When all controls are ON, the target is flipped once in Step-3 if the dirty ancilla is initially OFF and once in Step-7 if the dirty ancilla is initially ON. Each of step 2, 4, 6 and 8 requires $n - 3$ Toffoli so the construction requires a total of \mcxoptcomplexitydirty Toffoli.
}
\label{fig2:sfig1}
\end{subfigure}
\begin{subfigure}{\linewidth}
    \resizebox{\linewidth}{!}{
    \includegraphics{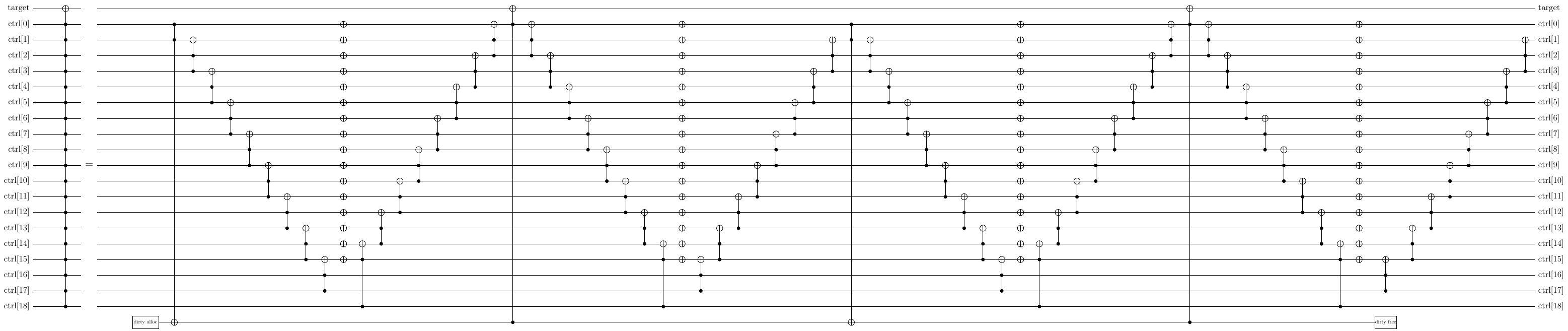}
    }
    \caption{Example circuit for decomposing a 19-bit Toffoli into 68 Toffoli using 1 dirty ancilla.}
\label{fig:one_dirty_n_depth:sfig2}
\end{subfigure}
\caption{Decomposition of $n$-qubit Toffoli into \mcxoptcomplexitydirty Toffoli using and $O(n)$ depth using 1 dirty ancilla.}
\label{fig:one_dirty_n_depth}
\end{adjustwidth}
\end{figure}

\subsection{\texorpdfstring{$n$}{n}-bit Toffoli into \texorpdfstring{\mcxoptcomplexitydirty}{4n-8} Toffoli and \texorpdfstring{$O(\log{n})$}{O(log(n))} depth using 2 dirty ancilla}\label{sec:two_dirty_logn_depth}

This construction builds upon the log-depth construction with two clean ancillae (\sec{two_clean_logn_depth}), but replaces the clean ancillae with dirty ones. For the first round of decomposition from \sec{two_clean_logn_depth}, we use standard toggle detection (also described in \sec{laddered_toggle_detection}) to replace the clean qubit with a dirty qubit. The recursive decomposition using the linear depth procedure and a second clean ancilla can treat a borrowed dirty ancilla as a clean ancilla, using laddered toggle detection from \sec{laddered_toggle_detection}, and thus the recursive decomposition remains identical to the clean ancilla case. This is shown in \fig{two_dirty_logn_depth}

\begin{figure}
\begin{adjustwidth}{-2cm}{-2cm}
\begin{subfigure}{\linewidth}
\includegraphics[width=\linewidth]{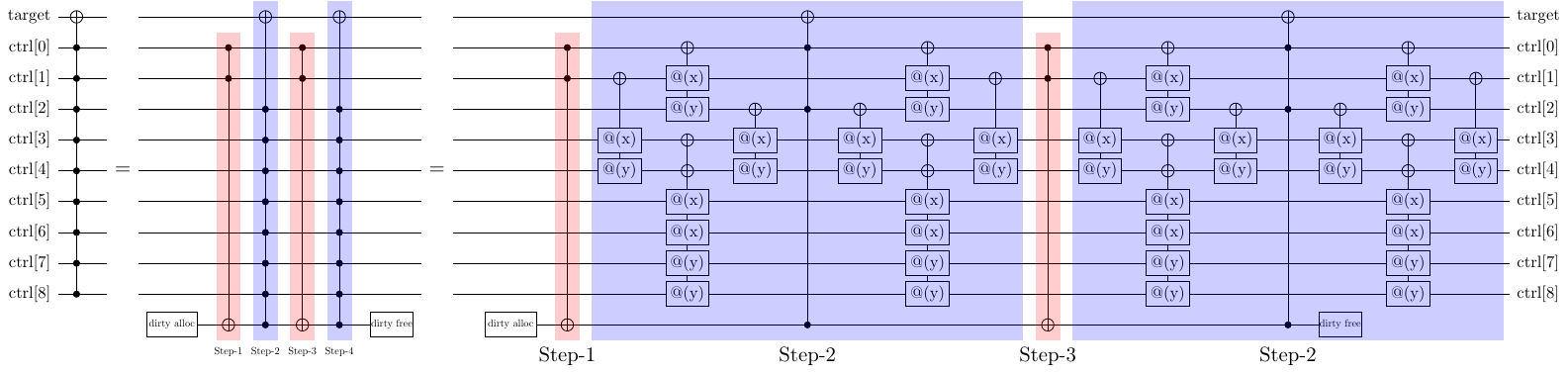}
\caption{Step-1 and 3 implement toggle detection using the first borrowed ancilla. Steps-2 and 4 use the laddered toggle detection trick from \fig{toggle_detection} to decompose an $(n - 1)$-bit Toffoli into $2n - 5$ Toffoli using 1 dirty ancilla, by treating the dirty ancilla as clean and using decomposition from \fig{mcx_two_clean_logn_depth}. The overall Toffoli complexity is $4n - 8$}
\label{fig:two_dirty_logn_depth:sfig1}
\end{subfigure}
\begin{subfigure}{\linewidth}
    \noindent
    \includegraphics[width=0.999\linewidth, left]{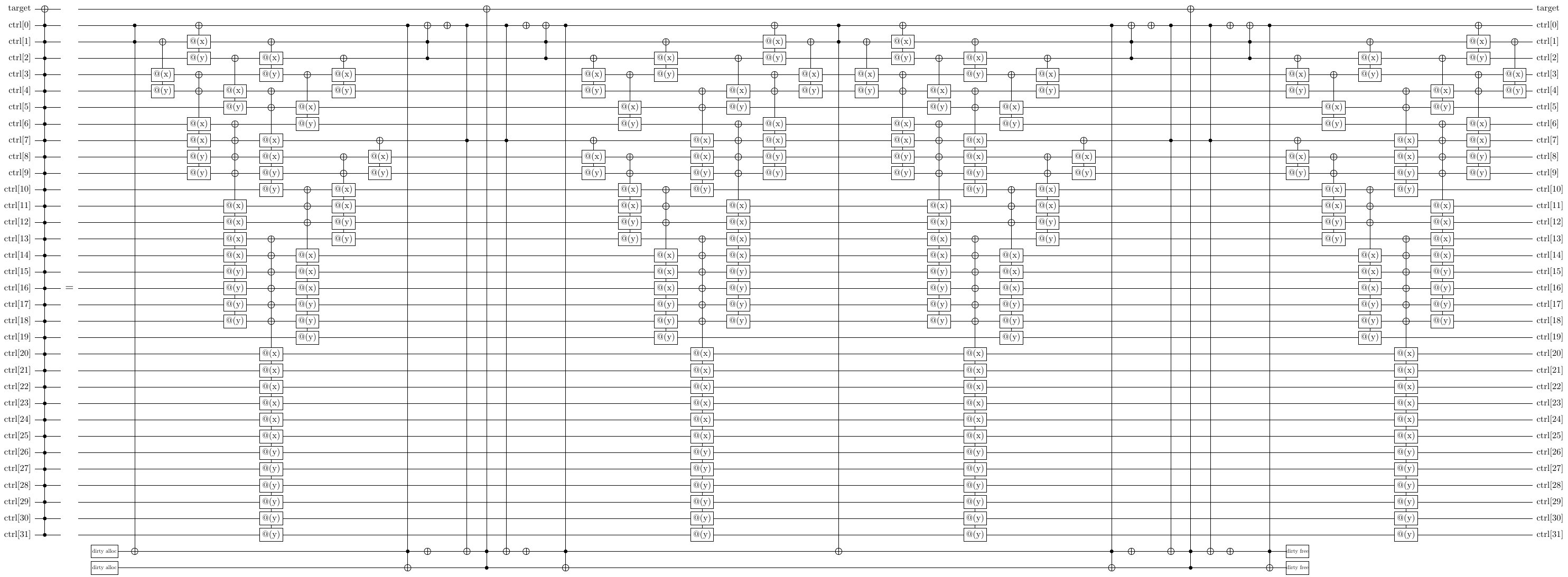}
    \caption{Example circuit for decomposing a 19-bit Toffoli into 68 Toffoli using 1 dirty ancilla and $\mathcal{O}(\log{n})$ depth.}
\label{fig:two_dirty_logn_depth:sfig2}
\end{subfigure}
\caption{Decomposition of $n$-qubit Toffoli into \mcxoptcomplexitydirty Toffoli using and $O(\log{n})$ depth using 2 dirty ancilla.}
\label{fig:two_dirty_logn_depth}
\end{adjustwidth}
\end{figure}

\section{Efficient Incrementer and Comparator Circuits} \label{sec:produce_consume_all_and}

In this section, we leverage the concept of conditionally clean ancillae to construct efficient circuits for incrementing a quantum register and for comparing a quantum register to a classical constant.  Both of these constructions rely on efficiently producing and consuming a ladder of prefix (or suffix) ANDs.

\subsection{Producing/Consuming all \texorpdfstring{$n$}{n}-bit prefix/suffix ANDs using \texorpdfstring{$3n$}{3n} Toffoli and \texorpdfstring{$\log_2^*{n}$}{log*(n)} clean ancilla}

A key building block for both the incrementer and comparator circuits is the ability to efficiently compute and consume all prefix (or suffix) ANDs of a set of qubits.  A \emph{prefix AND} of the first $i$ qubits $q_0, q_1, ..., q_{i-1}$ is the logical AND of those qubits: $q_0 \land q_1 \land ... \land q_{i-1}$.  A \emph{suffix AND} is similar, but operates on the last $i$ qubits.

If we had access to $n$ clean ancilla qubits, one could build a ladder of $n$ $\text{AND} / \text{AND}^{\dagger}$ gates such that the $i$'th ancilla qubit stores the prefix / suffix $\text{AND}$ of first / last $i$ qubits. Then, we can consume each prefix / suffix $\text{AND}$ using a $\text{CNOT}$ gate controlled on the $i$'th ancilla. This gives us a decomposition which uses $2n$ $\text{Toffoli}$ (or $n$ pairs of $\text{AND} / \text{AND}^{\dagger}$) gates to compute / uncompute every prefix / suffix AND and we can consume them using a single $\text{CNOT}$ gate.

With the help of conditionally clean ancilla qubits, we will aim to achieve the same thing - i.e. compute and uncompute the prefix / suffix $\text{AND}$ of all of the $n$ qubits on $n$ different conditionally clean qubits using $2n$ Toffoli gates and consume each prefix/suffix $\text{AND}$ using a $\text{Toffoli}$ gate. This gives us an overall Toffoli complexity of $3n$ instead of $2n$, like the clean ancilla case described above, because consuming a prefix $\text{AND}$ stored on a conditionally clean ancilla requires a $\text{Toffoli}$ gate instead of a $\text{CNOT}$ gate. 

The construction is shown in \fig{incrementer_3n} and follows closely from the construction described in \sec{two_clean_logn_depth} but instead of processing the $i$'th batch of size $2^{i}$ in log-depth, we now process it in linear depth and compute the prefix AND using $2^{i} - 1$ conditionally clean ancilla qubits as temporary workspace. 

Also, this time we need to recursively apply the same decomposition that computes the prefix $\text{AND}$ over the $K=\log{n}$ unmarked items obtained after the first decomposition because we wish to consume the prefix $\text{AND}$s sequentially and thus we want to build the structure recursively.
\fig{incrementer_3n:small} shows the circuit structure after only the first level of decomposition where consuming the prefix ANDs requires one to apply a $\log{n}$-controlled $\text{MCX}$ instead of a $\text{Toffoli}$. \fig{incrementer_3n:large} shows the full recursive decomposition where the $\log{n}$-controlled $\text{MCX}$ are recursively decomposed until $\log_2^*{n}\leq 1$ using the same strategy so that we can access the prefix AND of every prefix by controlling on at-most 2 qubits using a Toffoli gate. 

For all practical purposes, the required number of ancilla qubits is 5. Requiring six ancilla qubits would mean you were attempting to increment an integer with more digits than atoms in the observable universe.

\subsection{\texorpdfstring{$n$}{n}-bit Incrementer into \texorpdfstring{$3n$}{3n} Toffoli and \texorpdfstring{$O(n)$}{O(n)} depth using \texorpdfstring{$\log_2^*{n}$}{log*(n)} clean ancilla}\label{sec:n_bit_incrementer}

An $n$-bit incrementer adds 1 (modulo $2^n$) to an $n$-bit quantum register.  A naive implementation, shown in \fig{incrementer_3n:naive}, uses $n$ different MCX gates, where the $i$-th MCX gate is controlled on the first $i-1$ qubits and targets the $i$-th qubit~\cite{gidney2015blogincrement}. The key thing to observe here is that we need to compute and consume the prefix AND of all of the first $i$ qubits one by one.  We now use the technique described above to decompose this ladder of prefix ANDs into $3n$ Toffolis using $\log_2^*{n}$ clean ancillae. The complete decomposition is shown in \fig{incrementer_3n}.

\begin{figure}
\begin{adjustwidth}{-2cm}{-2cm}
\begin{subfigure}{0.4\linewidth}
\includegraphics[width=\linewidth]{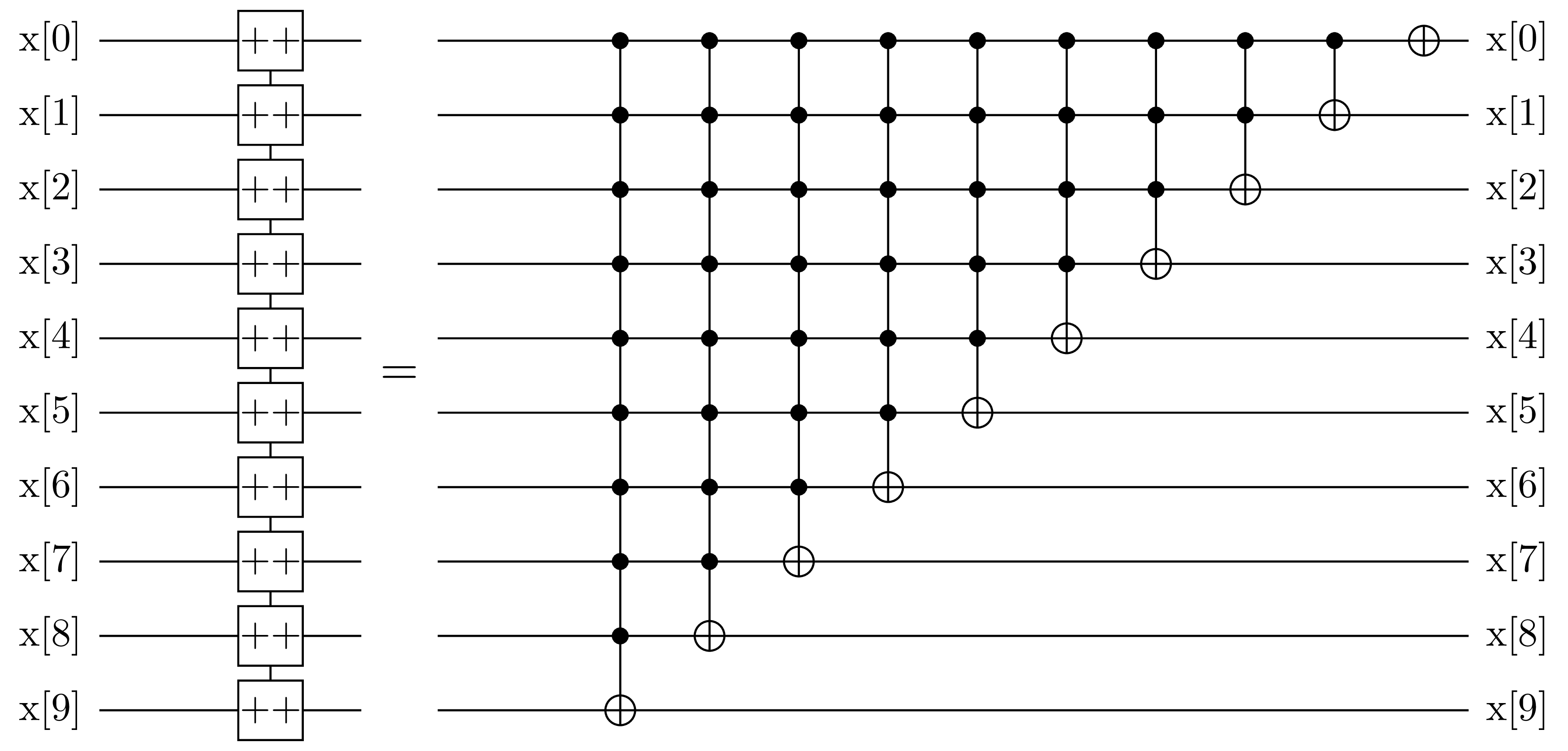}
\caption{Naive incrementer using $n$ MCX gates.}
\label{fig:incrementer_3n:naive}
\end{subfigure}~
\begin{subfigure}{0.58\linewidth}
\includegraphics[width=\linewidth]{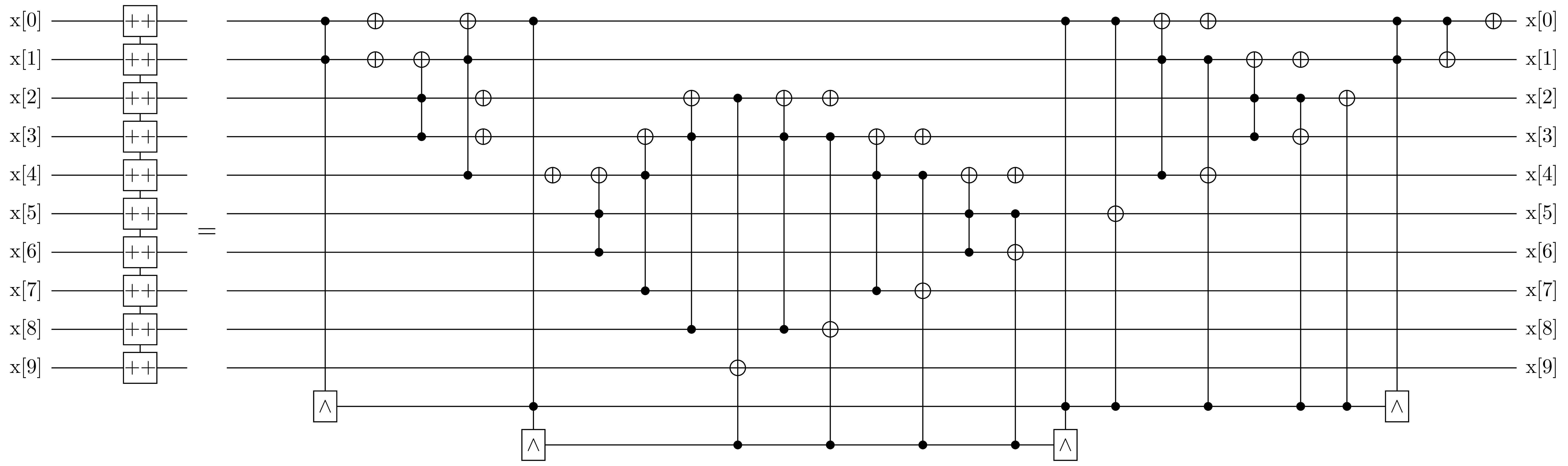}
\caption{Incrementer using conditionally clean qubits and 3n Toffoli.}
\label{fig:incrementer_3n:small}
\end{subfigure}
\begin{subfigure}[]{\linewidth}
\includegraphics[width=0.95\linewidth, height=180pt, left]{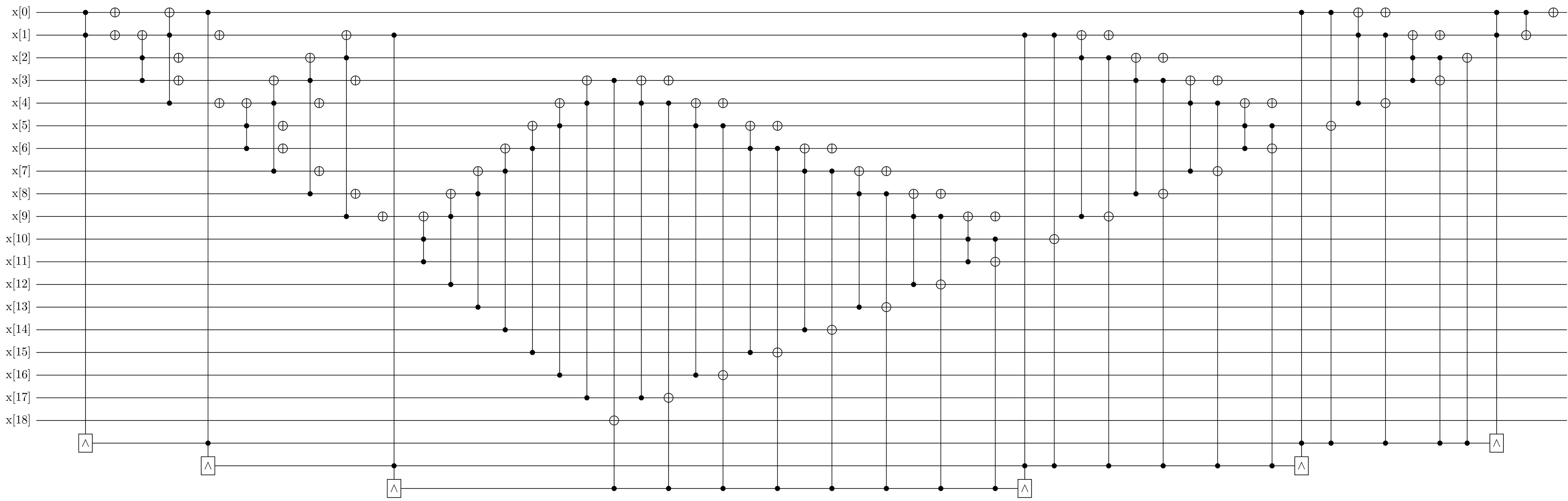}
\caption{$19$ qubit Incrementer recursively decomposed to produce/consume prefix AND for all of the first $18$ prefixes using $3n$ Toffoli and 3 clean ancillae.}
\label{fig:incrementer_3n:large}
\end{subfigure}
\caption{Decomposition of $n$-bit Incrementer into $3n$ Toffoli and $O(n)$ depth using $\log_2^*{n}$ clean ancilla.}
\label{fig:incrementer_3n}
\end{adjustwidth}
\end{figure}

\subsection{\texorpdfstring{$n$}{n}-bit LessThanConst into \texorpdfstring{$3n$}{3n} Toffoli and \texorpdfstring{$O(n)$}{O(n)} depth using \texorpdfstring{$\log_2^*{n}$}{log*(n)} clean ancilla}\label{sec:n_bit_comparator}
A $\text{LessThanConst}$ primitive implements a Quantum-Classical comparison of the form 
$$
\text{LessThanConst}_{c} \ket{x} \ket{t} \rightarrow \ket{x} \ket{t \oplus (x < c)}
$$
where $x$ is an $n$-bit quantum register, $c$ is a classical $n$-bit constant, and $t$ is a target qubit.

We can implement this operation by reducing it to the problem of consuming a ladder of suffix ANDs. The number of suffix ANDs to be consumed depends on the bits of the classical constant $c$.  In the worst case, we consume all $n$ suffix ANDs. \fig{less_than_const:quirk} shows how this reduction works.

Once we have reduced the problem to consuming suffix ANDs, we can directly apply the technique described earlier in this section to decompose the ladder of suffix ANDs into $3n$ Toffolis using $\log_2^*{n}$ clean qubits. The complete decomposition for a specific constant $c$ is shown in \fig{less_than_const:large}.

\begin{figure}
\begin{adjustwidth}{-2cm}{-2cm}
\begin{subfigure}{\linewidth}
\includegraphics[width=0.95\linewidth, left]{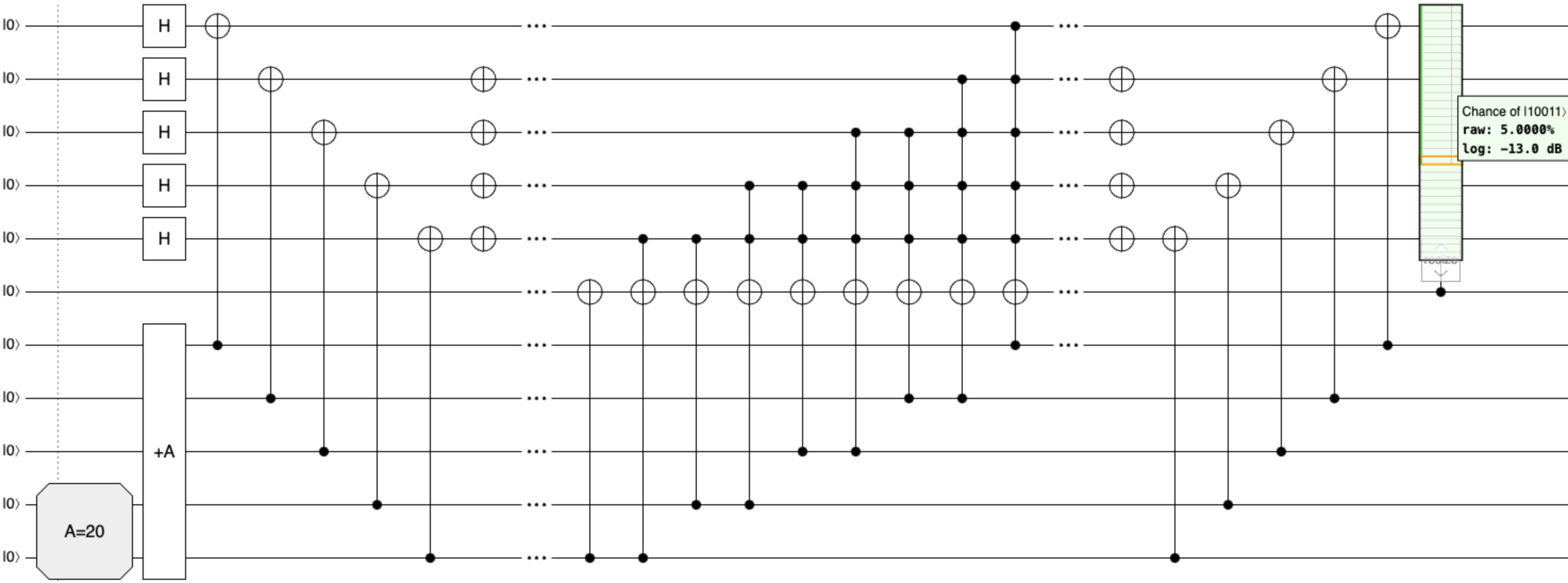}
\caption{Naive LessThanConst using $n$ MCX gates. Here is a \href{https://algassert.com/quirk\#circuit=\%7B\%22cols\%22\%3A\%5B\%5B\%22H\%22\%2C\%22H\%22\%2C\%22H\%22\%2C\%22H\%22\%2C\%22H\%22\%2C\%22H\%22\%2C1\%2C\%22Counting6\%22\%5D\%2C\%5B\%22X\%22\%2C1\%2C1\%2C1\%2C1\%2C1\%2C1\%2C\%22\%E2\%80\%A2\%22\%5D\%2C\%5B1\%2C\%22X\%22\%2C1\%2C1\%2C1\%2C1\%2C1\%2C1\%2C\%22\%E2\%80\%A2\%22\%5D\%2C\%5B1\%2C1\%2C\%22X\%22\%2C1\%2C1\%2C1\%2C1\%2C1\%2C1\%2C\%22\%E2\%80\%A2\%22\%5D\%2C\%5B1\%2C1\%2C1\%2C\%22X\%22\%2C1\%2C1\%2C1\%2C1\%2C1\%2C1\%2C\%22\%E2\%80\%A2\%22\%5D\%2C\%5B1\%2C1\%2C1\%2C1\%2C\%22X\%22\%2C1\%2C1\%2C1\%2C1\%2C1\%2C1\%2C\%22\%E2\%80\%A2\%22\%5D\%2C\%5B1\%2C1\%2C1\%2C1\%2C1\%2C\%22X\%22\%2C1\%2C1\%2C1\%2C1\%2C1\%2C1\%2C\%22\%E2\%80\%A2\%22\%5D\%2C\%5B1\%2C\%22X\%22\%2C\%22X\%22\%2C\%22X\%22\%2C\%22X\%22\%2C\%22X\%22\%5D\%2C\%5B\%22\%E2\%80\%A6\%22\%2C\%22\%E2\%80\%A6\%22\%2C\%22\%E2\%80\%A6\%22\%2C\%22\%E2\%80\%A6\%22\%2C\%22\%E2\%80\%A6\%22\%2C\%22\%E2\%80\%A6\%22\%2C\%22\%E2\%80\%A6\%22\%5D\%2C\%5B1\%2C1\%2C1\%2C1\%2C1\%2C1\%2C\%22X\%22\%2C1\%2C1\%2C1\%2C1\%2C1\%2C\%22\%E2\%80\%A2\%22\%5D\%2C\%5B1\%2C1\%2C1\%2C1\%2C1\%2C\%22\%E2\%80\%A2\%22\%2C\%22X\%22\%2C1\%2C1\%2C1\%2C1\%2C1\%2C\%22\%E2\%80\%A2\%22\%5D\%2C\%5B1\%2C1\%2C1\%2C1\%2C1\%2C\%22\%E2\%80\%A2\%22\%2C\%22X\%22\%2C1\%2C1\%2C1\%2C1\%2C\%22\%E2\%80\%A2\%22\%5D\%2C\%5B1\%2C1\%2C1\%2C1\%2C\%22\%E2\%80\%A2\%22\%2C\%22\%E2\%80\%A2\%22\%2C\%22X\%22\%2C1\%2C1\%2C1\%2C1\%2C\%22\%E2\%80\%A2\%22\%5D\%2C\%5B1\%2C1\%2C1\%2C1\%2C\%22\%E2\%80\%A2\%22\%2C\%22\%E2\%80\%A2\%22\%2C\%22X\%22\%2C1\%2C1\%2C1\%2C\%22\%E2\%80\%A2\%22\%5D\%2C\%5B1\%2C1\%2C1\%2C\%22\%E2\%80\%A2\%22\%2C\%22\%E2\%80\%A2\%22\%2C\%22\%E2\%80\%A2\%22\%2C\%22X\%22\%2C1\%2C1\%2C1\%2C\%22\%E2\%80\%A2\%22\%5D\%2C\%5B1\%2C1\%2C1\%2C\%22\%E2\%80\%A2\%22\%2C\%22\%E2\%80\%A2\%22\%2C\%22\%E2\%80\%A2\%22\%2C\%22X\%22\%2C1\%2C1\%2C\%22\%E2\%80\%A2\%22\%5D\%2C\%5B1\%2C1\%2C\%22\%E2\%80\%A2\%22\%2C\%22\%E2\%80\%A2\%22\%2C\%22\%E2\%80\%A2\%22\%2C\%22\%E2\%80\%A2\%22\%2C\%22X\%22\%2C1\%2C1\%2C\%22\%E2\%80\%A2\%22\%5D\%2C\%5B1\%2C1\%2C\%22\%E2\%80\%A2\%22\%2C\%22\%E2\%80\%A2\%22\%2C\%22\%E2\%80\%A2\%22\%2C\%22\%E2\%80\%A2\%22\%2C\%22X\%22\%2C1\%2C\%22\%E2\%80\%A2\%22\%5D\%2C\%5B1\%2C\%22\%E2\%80\%A2\%22\%2C\%22\%E2\%80\%A2\%22\%2C\%22\%E2\%80\%A2\%22\%2C\%22\%E2\%80\%A2\%22\%2C\%22\%E2\%80\%A2\%22\%2C\%22X\%22\%2C1\%2C\%22\%E2\%80\%A2\%22\%5D\%2C\%5B\%22\%E2\%80\%A2\%22\%2C\%22\%E2\%80\%A2\%22\%2C\%22\%E2\%80\%A2\%22\%2C\%22\%E2\%80\%A2\%22\%2C\%22\%E2\%80\%A2\%22\%2C\%22\%E2\%80\%A2\%22\%2C\%22X\%22\%2C\%22\%E2\%80\%A2\%22\%5D\%2C\%5B\%22\%E2\%80\%A6\%22\%2C\%22\%E2\%80\%A6\%22\%2C\%22\%E2\%80\%A6\%22\%2C\%22\%E2\%80\%A6\%22\%2C\%22\%E2\%80\%A6\%22\%2C\%22\%E2\%80\%A6\%22\%2C\%22\%E2\%80\%A6\%22\%5D\%2C\%5B1\%2C\%22X\%22\%2C\%22X\%22\%2C\%22X\%22\%2C\%22X\%22\%2C\%22X\%22\%5D\%2C\%5B1\%2C1\%2C1\%2C1\%2C1\%2C\%22X\%22\%2C1\%2C1\%2C1\%2C1\%2C1\%2C1\%2C\%22\%E2\%80\%A2\%22\%5D\%2C\%5B1\%2C1\%2C1\%2C1\%2C\%22X\%22\%2C1\%2C1\%2C1\%2C1\%2C1\%2C1\%2C\%22\%E2\%80\%A2\%22\%5D\%2C\%5B1\%2C1\%2C1\%2C\%22X\%22\%2C1\%2C1\%2C1\%2C1\%2C1\%2C1\%2C\%22\%E2\%80\%A2\%22\%5D\%2C\%5B1\%2C1\%2C\%22X\%22\%2C1\%2C1\%2C1\%2C1\%2C1\%2C1\%2C\%22\%E2\%80\%A2\%22\%5D\%2C\%5B1\%2C\%22X\%22\%2C1\%2C1\%2C1\%2C1\%2C1\%2C1\%2C\%22\%E2\%80\%A2\%22\%5D\%2C\%5B\%22X\%22\%2C1\%2C1\%2C1\%2C1\%2C1\%2C1\%2C\%22\%E2\%80\%A2\%22\%5D\%2C\%5B\%22Chance6\%22\%2C1\%2C1\%2C1\%2C1\%2C1\%2C\%22\%E2\%80\%A2\%22\%5D\%5D\%7D}{quirk link} of a circuit which shows this reduction.}
\label{fig:less_than_const:quirk}
\end{subfigure}
\begin{subfigure}[]{\linewidth}
\includegraphics[width=0.95\linewidth, height=170pt]{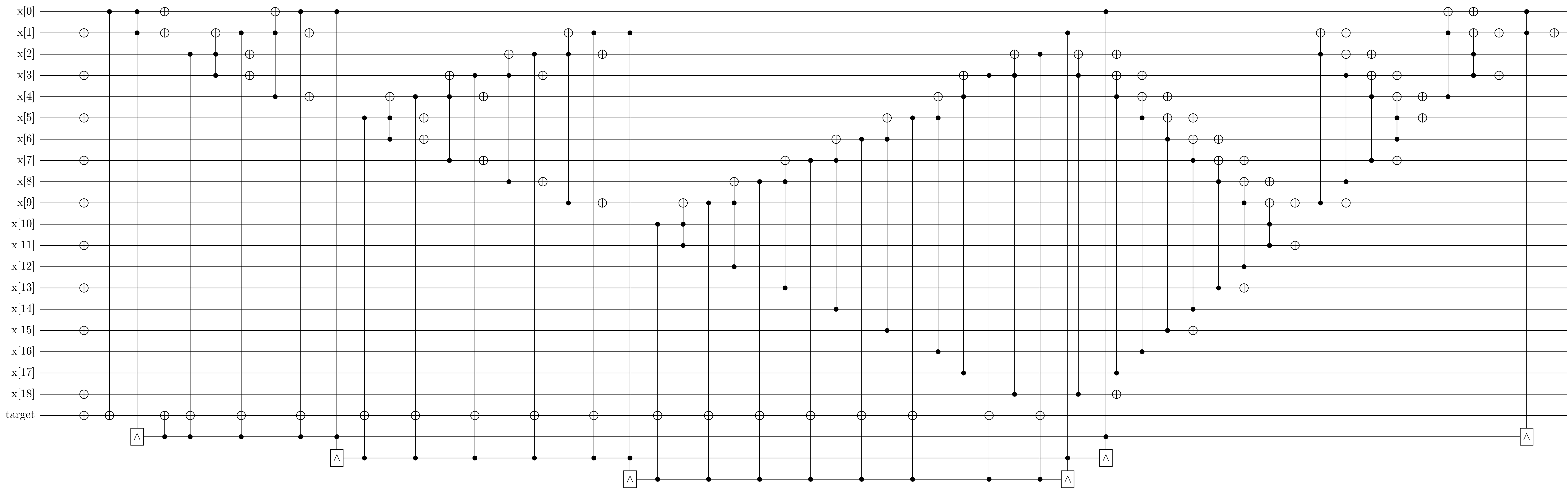}
\caption{$19$-bit $\text{LessThanConst}$ for constant $c=349525$ that has a binary representation $1010101010101010101$. The quantum-classical comparator uses 3 clean ancillae and $47$ Toffoli gates.}
\label{fig:less_than_const:large}
\end{subfigure}
\caption{Decomposition of $n$-bit LessThanConst into $3n$ Toffoli and $O(n)$ depth using $\log_2^*{n}$ clean ancilla.}
\label{fig:less_than_const}
\end{adjustwidth}
\end{figure}

\section{Constructions for Unary Iteration and QROM}\label{sec:unary_iteration}

Given an $n$-bit selection register $S$ and a sequence of unitary operations $[V_0, V_1, \dots, V_{2^n - 1}]$, each acting on an $m$-bit target register; Unary iteration is a technique that allows us to apply $V_i$ to the target register when the selection register stores integer $\ket{i}$. In other words, the action of unary iteration can be described by
\begin{equation}\label{eq:unary_iteration}
    U = \sum_{i=0}^{N-1}\ket{i}\bra{i} \otimes V_{i}
\end{equation}
where $N=2^n$ is the number of unitaries $V_i$ to apply on different branches of the superposition. 

The technique was introduced in~\cite{Babbush2018} and now forms a  fundamental building block for many quantum algorithms~\cite{Lee2021, Rubin2024, yuan2023improved}, including the Quantum Read-Only Memory (QROM) operation~\cite{zhu2024unifiedarchitecturequantumlookup}. QROM is a technique for loading classical data into a quantum superposition.

The Controlled Unary Iteration construction introduced by~\cite{Babbush2018} uses $n$ clean ancilla qubits to iterate on $N = 2^{n}$ indices with a Toffoli cost of $N - 1$. 
In this section, we will describe two new constructions for the low ancilla regime where we either consume a constant number of clean ancillae and use the system qubits as conditionally clean qubits, or we borrow $n$ dirty qubits from the system.

\subsection{Unary Iteration as a tree traversal}
Unary iteration can be viewed as a tree traversal where each node of the tree corresponds to one or more elements of the range we wish to iterate upon and traversing the tree in a DFS order produces a circuit that corresponds to unary iteration. 

In \fig{qrom_binary_tree} we explain how one can view the standard unary iteration circuit as a tree traversal of a balanced binary tree. If we are given $n$ clean ancillae, we can use the circuit elements from \fig{qrom_binary_tree:sfig3} to arrive at the construction with a Toffoli cost of $N - 1$, presented in~\cite{Babbush2018}.

In \fig{qrom_binary_tree:sfig4} and \fig{qrom_binary_tree:sfig5} we show how we can modify the construction to avoid measurement based uncomputation with a Toffoli cost of $1.5N - 1$. This construction is identical to the one described in Appendix G.4 of~\cite{Childs2018}. Avoiding measurement based uncomputation will become important when we describe constructions using conditionally clean ancillae or dirty ancillae - both of which cannot be cleaned up using measurement based uncomputation. 

\subsection{Skew Trees for Unary Iteration}

In situations where it's cheap to apply the inverse of operations we are indexing over, we can use a different tree than balanced binary trees. In \fig{qrom_skew_tree:sfig1}, we give a recursive definition of a ``Skew tree" and in \fig{qrom_skew_tree:sfig2} we show how tree traversal of a Skew tree can be translated into unary iteration. 
The basic idea of the skewed tree is to replace circuits of the form ``if $\lnot C$ then do $A$; if $C$ then do $B$" with circuits of the form ``do $A$; if $C$ then do $A^{-1} \cdot B$".
Instead of doing $A$ conditionally, we do $A$ unconditionally but undo $A$ in addition to doing $B$ when $C$ is true.
This optimization works best when $A^{-1} \cdot B$ is efficient to apply.
This is the case in QROM reads, where $A$ and $B$ are both a product of Pauli X gates. The skewed tree has better behavior when $L$ is not a power of 2.
For example, the lopsided tree over $L$ items has $\lfloor \log_2{L} \rfloor$ levels, whereas the binary tree has $\lceil \log_2{L} \rceil$ levels.
This saves an ancilla qubit for domain sizes $L$ that are not a power of 2.
The skewed tree associates outputs with every node, instead of only associating outputs with leaf nodes.
This halves the size of the tree, which reduces the size of the circuit.
In \fig{qrom_skew_tree:sfig4}, we show how one can use a Skew tree to perform unary iteration of $N$ elements using $1.25N - 1$ Toffoli gates and still avoid measurement based uncomputation (compared to $1.5N - 1$ using balanced binary trees).  

Here is a \href{https://algassert.com/quirk#circuit=%7B%22cols%22%3A%5B%5B1%2C1%2C1%2C1%2C%22Counting4%22%5D%2C%5B1%2C1%2C1%2C%22%E2%80%A2%22%2C1%2C1%2C1%2C1%2C%22X%22%5D%2C%5B1%2C1%2C1%2C%22%E2%80%A2%22%2C1%2C%22Chance4%22%5D%2C%5B1%2C1%2C1%2C%22%E2%80%A2%22%2C%22%E2%80%A2%22%2C1%2C1%2C1%2C%22X%22%2C%22X%22%5D%2C%5B1%2C1%2C%22X%22%2C%22%E2%80%A2%22%2C1%2C%22%E2%80%A2%22%5D%2C%5B1%2C1%2C%22%E2%80%A2%22%2C1%2C%22Chance4%22%5D%2C%5B1%2C1%2C%22%E2%80%A2%22%2C1%2C1%2C1%2C1%2C1%2C1%2C%22X%22%5D%2C%5B1%2C1%2C%22%E2%80%A2%22%2C1%2C%22%E2%80%A2%22%2C1%2C1%2C1%2C1%2C1%2C%22X%22%5D%2C%5B1%2C1%2C1%2C1%2C1%2C%22X%22%2C%22%E2%80%A2%22%5D%2C%5B1%2C1%2C%22X%22%2C%22%E2%80%A2%22%2C1%2C%22%E2%80%A2%22%5D%2C%5B1%2C1%2C1%2C1%2C1%2C%22X%22%2C%22%E2%80%A2%22%5D%2C%5B1%2C1%2C%22%E2%80%A2%22%2C1%2C%22Chance4%22%5D%2C%5B1%2C1%2C%22%E2%80%A2%22%2C1%2C1%2C1%2C1%2C1%2C1%2C1%2C%22X%22%5D%2C%5B1%2C1%2C%22%E2%80%A2%22%2C1%2C%22%E2%80%A2%22%2C1%2C1%2C1%2C%22%E2%80%A6%22%2C%22%E2%80%A6%22%2C%22%E2%80%A6%22%2C%22%E2%80%A6%22%5D%2C%5B1%2C%22X%22%2C%22%E2%80%A2%22%2C1%2C1%2C%22%E2%80%A2%22%5D%2C%5B1%2C%22%E2%80%A2%22%2C1%2C1%2C%22Chance4%22%5D%2C%5B1%2C%22%E2%80%A2%22%2C1%2C1%2C1%2C1%2C1%2C1%2C%22%E2%80%A6%22%2C%22%E2%80%A6%22%2C%22%E2%80%A6%22%2C%22%E2%80%A6%22%5D%2C%5B1%2C%22%E2%80%A2%22%2C1%2C1%2C%22%E2%80%A2%22%2C1%2C1%2C1%2C1%2C1%2C1%2C%22X%22%5D%2C%5B1%2C%22X%22%2C%22%E2%80%A2%22%2C1%2C1%2C%22%E2%80%A2%22%5D%2C%5B1%2C1%2C1%2C1%2C1%2C1%2C%22X%22%2C%22%E2%80%A2%22%5D%2C%5B1%2C1%2C%22X%22%2C%22%E2%80%A2%22%2C1%2C1%2C%22%E2%80%A2%22%5D%2C%5B1%2C1%2C1%2C1%2C1%2C1%2C%22X%22%2C%22%E2%80%A2%22%5D%2C%5B1%2C1%2C%22%E2%80%A2%22%2C1%2C%22Chance4%22%5D%2C%5B1%2C1%2C%22%E2%80%A2%22%2C1%2C1%2C1%2C1%2C1%2C1%2C1%2C1%2C%22X%22%5D%2C%5B1%2C1%2C%22%E2%80%A2%22%2C1%2C%22%E2%80%A2%22%2C1%2C1%2C1%2C%22%E2%80%A6%22%2C%22%E2%80%A6%22%2C%22%E2%80%A6%22%2C%22%E2%80%A6%22%5D%2C%5B1%2C%22X%22%2C%22%E2%80%A2%22%2C1%2C1%2C%22%E2%80%A2%22%5D%2C%5B1%2C%22%E2%80%A2%22%2C1%2C1%2C%22Chance4%22%5D%2C%5B1%2C%22%E2%80%A2%22%2C1%2C1%2C1%2C1%2C1%2C1%2C%22%E2%80%A6%22%2C%22%E2%80%A6%22%2C%22%E2%80%A6%22%2C%22%E2%80%A6%22%5D%2C%5B1%2C%22%E2%80%A2%22%2C1%2C1%2C%22%E2%80%A2%22%2C1%2C1%2C1%2C%22%E2%80%A6%22%2C%22%E2%80%A6%22%2C%22%E2%80%A6%22%2C%22%E2%80%A6%22%5D%2C%5B1%2C1%2C1%2C1%2C1%2C%22X%22%2C%22%E2%80%A2%22%5D%2C%5B1%2C%22X%22%2C%22%E2%80%A2%22%2C1%2C1%2C%22%E2%80%A2%22%5D%2C%5B1%2C1%2C1%2C1%2C1%2C%22X%22%2C%22%E2%80%A2%22%5D%2C%5B1%2C%22%E2%80%A2%22%2C1%2C1%2C%22Chance4%22%5D%2C%5B1%2C%22%E2%80%A2%22%2C1%2C1%2C1%2C1%2C1%2C1%2C1%2C%22%E2%80%A6%22%2C%22%E2%80%A6%22%2C%22%E2%80%A6%22%5D%2C%5B1%2C%22%E2%80%A2%22%2C1%2C1%2C%22%E2%80%A2%22%2C1%2C1%2C1%2C1%2C%22%E2%80%A6%22%2C%22%E2%80%A6%22%2C%22%E2%80%A6%22%5D%2C%5B1%2C%22%E2%80%A2%22%2C1%2C%22%E2%80%A2%22%2C%22Chance4%22%5D%2C%5B%22X%22%2C%22%E2%80%A2%22%2C1%2C1%2C1%2C%22%E2%80%A2%22%5D%2C%5B%22%E2%80%A2%22%2C1%2C1%2C1%2C%22Chance4%22%5D%2C%5B%22%E2%80%A2%22%2C1%2C1%2C1%2C1%2C1%2C1%2C1%2C%22%E2%80%A6%22%2C%22%E2%80%A6%22%2C%22%E2%80%A6%22%2C%22%E2%80%A6%22%5D%2C%5B%22%E2%80%A2%22%2C1%2C1%2C1%2C%22%E2%80%A2%22%2C1%2C1%2C1%2C1%2C1%2C1%2C1%2C%22X%22%5D%2C%5B%22X%22%2C%22%E2%80%A2%22%2C1%2C1%2C1%2C%22%E2%80%A2%22%5D%2C%5B1%2C%22X%22%2C%22%E2%80%A2%22%2C1%2C1%2C1%2C%22%E2%80%A2%22%5D%2C%5B1%2C1%2C%22X%22%2C%22%E2%80%A2%22%2C1%2C1%2C1%2C%22%E2%80%A2%22%5D%2C%5B1%2C1%2C1%2C1%2C%22Chance4%22%2C1%2C1%2C1%2C%22Chance4%22%5D%5D%2C%22init%22%3A%5B0%2C0%2C0%2C1%5D%7D}{quirk link} to a QROM circuit that loads the first 16 natural numbers as $\text{data = [1, 2, ..., 16]}$ using a skew tree construction. Many of the controlled reads are 0 because for a skew tree construction, the data to be loaded needs to be modified as follows to account for undoing the unconditional read done at a previous index which affects the current index.

\begin{lstlisting}[language=Python]
skew_data = [0] * N
for i in range(N):
    for j in range(i, N):
        if i & j == i:
            skew_data[j] ^= data[i]
\end{lstlisting}

\subsection{Unary iteration and QROM using conditionally clean ancillae}\label{sec:unary_iteration_cca}
We can use the tricks developed in \sec{produce_consume_all_and} to consume a constant number of clean qubits and use the generated conditionally clean qubits to produce / consume a prefix AND ladder and combine it with unary iteration constructions given in \fig{qrom_binary_tree:sfig4} and \fig{qrom_skew_tree:sfig4} to get a constant ancilla version of unary iteration using both balanced binary trees and skewed trees. 
The main overhead with this approach is that consuming a prefix AND stored on a conditionally clean ancilla requires a $\text{Toffoli}$ instead of a $\text{CNOT}$. Therefore, we get an $N$ Toffoli overhead in both the approaches described above. 

Therefore, a constant number of clean ancillae and conditionally clean qubits, we can do unary iteration over $N$ elements using $2.5N$ Toffoli gates via balanced binary trees and $2.25N$ Toffoli gates via skewed trees. 

Here is a \href{https://algassert.com/quirk\#circuit=\%7B\%22cols\%22\%3A\%5B\%5B\%22Counting6\%22\%5D\%2C\%5B\%22QFT6\%22\%5D\%2C\%5B1\%2C1\%2C1\%2C1\%2C\%22\%E2\%97\%A6\%22\%2C\%22\%E2\%97\%A6\%22\%2C\%22X\%22\%5D\%2C\%5B1\%2C1\%2C\%22\%E2\%97\%A6\%22\%2C\%22\%E2\%97\%A6\%22\%2C\%22X\%22\%5D\%2C\%5B1\%2C\%22\%E2\%97\%A6\%22\%2C1\%2C1\%2C\%22\%E2\%80\%A2\%22\%2C\%22X\%22\%5D\%2C\%5B1\%2C1\%2C1\%2C1\%2C1\%2C\%22\%E2\%80\%A2\%22\%2C\%22\%E2\%80\%A2\%22\%2C\%22X\%22\%5D\%2C\%5B\%22\%E2\%97\%A6\%22\%2C1\%2C1\%2C1\%2C1\%2C1\%2C1\%2C\%22\%E2\%80\%A2\%22\%2C\%22X\%22\%5D\%2C\%5B\%22Chance4\%22\%2C1\%2C1\%2C1\%2C\%22Chance\%22\%2C\%22Chance\%22\%2C1\%2C1\%2C\%22\%E2\%80\%A2\%22\%5D\%2C\%5B1\%2C1\%2C1\%2C1\%2C1\%2C1\%2C1\%2C\%22\%E2\%80\%A2\%22\%2C\%22X\%22\%5D\%2C\%5B\%22Chance4\%22\%2C1\%2C1\%2C1\%2C\%22Chance\%22\%2C\%22Chance\%22\%2C1\%2C1\%2C\%22\%E2\%80\%A2\%22\%5D\%2C\%5B\%22\%E2\%80\%A2\%22\%2C1\%2C1\%2C1\%2C1\%2C1\%2C1\%2C\%22\%E2\%80\%A2\%22\%2C\%22X\%22\%5D\%2C\%5B1\%2C1\%2C1\%2C1\%2C\%22\%E2\%80\%A2\%22\%2C\%22X\%22\%5D\%2C\%5B1\%2C1\%2C1\%2C1\%2C\%22\%E2\%80\%A2\%22\%2C1\%2C\%22\%E2\%80\%A2\%22\%2C\%22X\%22\%5D\%2C\%5B\%22\%E2\%97\%A6\%22\%2C1\%2C1\%2C1\%2C1\%2C1\%2C1\%2C\%22\%E2\%80\%A2\%22\%2C\%22X\%22\%5D\%2C\%5B\%22Chance4\%22\%2C1\%2C1\%2C1\%2C\%22Chance\%22\%2C\%22Chance\%22\%2C1\%2C1\%2C\%22\%E2\%80\%A2\%22\%5D\%2C\%5B1\%2C1\%2C1\%2C1\%2C1\%2C1\%2C1\%2C\%22\%E2\%80\%A2\%22\%2C\%22X\%22\%5D\%2C\%5B\%22Chance4\%22\%2C1\%2C1\%2C1\%2C\%22Chance\%22\%2C\%22Chance\%22\%2C1\%2C1\%2C\%22\%E2\%80\%A2\%22\%5D\%2C\%5B\%22\%E2\%80\%A2\%22\%2C1\%2C1\%2C1\%2C1\%2C1\%2C1\%2C\%22\%E2\%80\%A2\%22\%2C\%22X\%22\%5D\%2C\%5B1\%2C1\%2C1\%2C1\%2C1\%2C\%22\%E2\%80\%A2\%22\%2C\%22\%E2\%80\%A2\%22\%2C\%22X\%22\%5D\%2C\%5B1\%2C1\%2C1\%2C\%22\%E2\%97\%A6\%22\%2C\%22X\%22\%5D\%2C\%5B1\%2C1\%2C1\%2C1\%2C\%22\%E2\%80\%A2\%22\%2C\%22X\%22\%5D\%2C\%5B1\%2C\%22\%E2\%80\%A2\%22\%2C1\%2C\%22\%E2\%97\%A6\%22\%2C1\%2C\%22X\%22\%5D\%2C\%5B1\%2C1\%2C1\%2C1\%2C1\%2C\%22\%E2\%80\%A2\%22\%2C\%22\%E2\%80\%A2\%22\%2C\%22X\%22\%5D\%2C\%5B\%22\%E2\%97\%A6\%22\%2C1\%2C1\%2C1\%2C1\%2C1\%2C1\%2C\%22\%E2\%80\%A2\%22\%2C\%22X\%22\%5D\%2C\%5B\%22Chance4\%22\%2C1\%2C1\%2C1\%2C\%22Chance\%22\%2C\%22Chance\%22\%2C1\%2C1\%2C\%22\%E2\%80\%A2\%22\%5D\%2C\%5B1\%2C1\%2C1\%2C1\%2C1\%2C1\%2C1\%2C\%22\%E2\%80\%A2\%22\%2C\%22X\%22\%5D\%2C\%5B\%22Chance4\%22\%2C1\%2C1\%2C1\%2C\%22Chance\%22\%2C\%22Chance\%22\%2C1\%2C1\%2C\%22\%E2\%80\%A2\%22\%5D\%2C\%5B\%22\%E2\%80\%A2\%22\%2C1\%2C1\%2C1\%2C1\%2C1\%2C1\%2C\%22\%E2\%80\%A2\%22\%2C\%22X\%22\%5D\%2C\%5B1\%2C1\%2C1\%2C1\%2C\%22\%E2\%80\%A2\%22\%2C\%22X\%22\%5D\%2C\%5B1\%2C1\%2C1\%2C1\%2C\%22\%E2\%80\%A2\%22\%2C1\%2C\%22\%E2\%80\%A2\%22\%2C\%22X\%22\%5D\%2C\%5B\%22\%E2\%97\%A6\%22\%2C1\%2C1\%2C1\%2C1\%2C1\%2C1\%2C\%22\%E2\%80\%A2\%22\%2C\%22X\%22\%5D\%2C\%5B\%22Chance4\%22\%2C1\%2C1\%2C1\%2C\%22Chance\%22\%2C\%22Chance\%22\%2C1\%2C1\%2C\%22\%E2\%80\%A2\%22\%5D\%2C\%5B1\%2C1\%2C1\%2C1\%2C1\%2C1\%2C1\%2C\%22\%E2\%80\%A2\%22\%2C\%22X\%22\%5D\%2C\%5B\%22Chance4\%22\%2C1\%2C1\%2C1\%2C\%22Chance\%22\%2C\%22Chance\%22\%2C1\%2C1\%2C\%22\%E2\%80\%A2\%22\%5D\%2C\%5B\%22\%E2\%80\%A2\%22\%2C1\%2C1\%2C1\%2C1\%2C1\%2C1\%2C\%22\%E2\%80\%A2\%22\%2C\%22X\%22\%5D\%2C\%5B1\%2C1\%2C1\%2C1\%2C1\%2C\%22\%E2\%80\%A2\%22\%2C\%22\%E2\%80\%A2\%22\%2C\%22X\%22\%5D\%2C\%5B1\%2C1\%2C\%22\%E2\%80\%A2\%22\%2C\%22X\%22\%5D\%2C\%5B1\%2C1\%2C1\%2C\%22\%E2\%80\%A2\%22\%2C\%22X\%22\%5D\%2C\%5B1\%2C1\%2C1\%2C1\%2C\%22\%E2\%80\%A2\%22\%2C\%22X\%22\%5D\%2C\%5B1\%2C\%22\%E2\%80\%A2\%22\%2C1\%2C\%22\%E2\%80\%A2\%22\%2C1\%2C\%22X\%22\%5D\%2C\%5B1\%2C1\%2C\%22\%E2\%80\%A2\%22\%2C\%22X\%22\%5D\%2C\%5B1\%2C1\%2C1\%2C1\%2C1\%2C\%22\%E2\%80\%A2\%22\%2C\%22\%E2\%80\%A2\%22\%2C\%22X\%22\%5D\%2C\%5B\%22\%E2\%97\%A6\%22\%2C1\%2C1\%2C1\%2C1\%2C1\%2C1\%2C\%22\%E2\%80\%A2\%22\%2C\%22X\%22\%5D\%2C\%5B\%22Chance4\%22\%2C1\%2C1\%2C1\%2C\%22Chance\%22\%2C\%22Chance\%22\%2C1\%2C1\%2C\%22\%E2\%80\%A2\%22\%5D\%2C\%5B1\%2C1\%2C1\%2C1\%2C1\%2C1\%2C1\%2C\%22\%E2\%80\%A2\%22\%2C\%22X\%22\%5D\%2C\%5B\%22Chance4\%22\%2C1\%2C1\%2C1\%2C\%22Chance\%22\%2C\%22Chance\%22\%2C1\%2C1\%2C\%22\%E2\%80\%A2\%22\%5D\%2C\%5B\%22\%E2\%80\%A2\%22\%2C1\%2C1\%2C1\%2C1\%2C1\%2C1\%2C\%22\%E2\%80\%A2\%22\%2C\%22X\%22\%5D\%2C\%5B1\%2C1\%2C1\%2C1\%2C\%22\%E2\%80\%A2\%22\%2C\%22X\%22\%5D\%2C\%5B1\%2C1\%2C1\%2C1\%2C\%22\%E2\%80\%A2\%22\%2C1\%2C\%22\%E2\%80\%A2\%22\%2C\%22X\%22\%5D\%2C\%5B\%22\%E2\%97\%A6\%22\%2C1\%2C1\%2C1\%2C1\%2C1\%2C1\%2C\%22\%E2\%80\%A2\%22\%2C\%22X\%22\%5D\%2C\%5B\%22Chance4\%22\%2C1\%2C1\%2C1\%2C\%22Chance\%22\%2C\%22Chance\%22\%2C1\%2C1\%2C\%22\%E2\%80\%A2\%22\%5D\%2C\%5B1\%2C1\%2C1\%2C1\%2C1\%2C1\%2C1\%2C\%22\%E2\%80\%A2\%22\%2C\%22X\%22\%5D\%2C\%5B\%22Chance4\%22\%2C1\%2C1\%2C1\%2C\%22Chance\%22\%2C\%22Chance\%22\%2C1\%2C1\%2C\%22\%E2\%80\%A2\%22\%5D\%2C\%5B\%22\%E2\%80\%A2\%22\%2C1\%2C1\%2C1\%2C1\%2C1\%2C1\%2C\%22\%E2\%80\%A2\%22\%2C\%22X\%22\%5D\%2C\%5B1\%2C1\%2C1\%2C1\%2C1\%2C\%22\%E2\%80\%A2\%22\%2C\%22\%E2\%80\%A2\%22\%2C\%22X\%22\%5D\%2C\%5B1\%2C1\%2C1\%2C\%22\%E2\%80\%A2\%22\%2C\%22X\%22\%5D\%2C\%5B1\%2C1\%2C1\%2C1\%2C\%22\%E2\%80\%A2\%22\%2C\%22X\%22\%5D\%2C\%5B1\%2C\%22\%E2\%80\%A2\%22\%2C1\%2C\%22\%E2\%80\%A2\%22\%2C1\%2C\%22X\%22\%5D\%2C\%5B1\%2C1\%2C1\%2C1\%2C1\%2C\%22\%E2\%80\%A2\%22\%2C\%22\%E2\%80\%A2\%22\%2C\%22X\%22\%5D\%2C\%5B\%22\%E2\%97\%A6\%22\%2C1\%2C1\%2C1\%2C1\%2C1\%2C1\%2C\%22\%E2\%80\%A2\%22\%2C\%22X\%22\%5D\%2C\%5B\%22Chance4\%22\%2C1\%2C1\%2C1\%2C\%22Chance\%22\%2C\%22Chance\%22\%2C1\%2C1\%2C\%22\%E2\%80\%A2\%22\%5D\%2C\%5B1\%2C1\%2C1\%2C1\%2C1\%2C1\%2C1\%2C\%22\%E2\%80\%A2\%22\%2C\%22X\%22\%5D\%2C\%5B\%22Chance4\%22\%2C1\%2C1\%2C1\%2C\%22Chance\%22\%2C\%22Chance\%22\%2C1\%2C1\%2C\%22\%E2\%80\%A2\%22\%5D\%2C\%5B\%22\%E2\%80\%A2\%22\%2C1\%2C1\%2C1\%2C1\%2C1\%2C1\%2C\%22\%E2\%80\%A2\%22\%2C\%22X\%22\%5D\%2C\%5B1\%2C1\%2C1\%2C1\%2C\%22\%E2\%80\%A2\%22\%2C\%22X\%22\%5D\%2C\%5B1\%2C1\%2C1\%2C1\%2C\%22\%E2\%80\%A2\%22\%2C1\%2C\%22\%E2\%80\%A2\%22\%2C\%22X\%22\%5D\%2C\%5B\%22\%E2\%97\%A6\%22\%2C1\%2C1\%2C1\%2C1\%2C1\%2C1\%2C\%22\%E2\%80\%A2\%22\%2C\%22X\%22\%5D\%2C\%5B\%22Chance4\%22\%2C1\%2C1\%2C1\%2C\%22Chance\%22\%2C\%22Chance\%22\%2C1\%2C1\%2C\%22\%E2\%80\%A2\%22\%5D\%2C\%5B1\%2C1\%2C1\%2C1\%2C1\%2C1\%2C1\%2C\%22\%E2\%80\%A2\%22\%2C\%22X\%22\%5D\%2C\%5B\%22Chance4\%22\%2C1\%2C1\%2C1\%2C\%22Chance\%22\%2C\%22Chance\%22\%2C1\%2C1\%2C\%22\%E2\%80\%A2\%22\%5D\%2C\%5B\%22\%E2\%80\%A2\%22\%2C1\%2C1\%2C1\%2C1\%2C1\%2C1\%2C\%22\%E2\%80\%A2\%22\%2C\%22X\%22\%5D\%2C\%5B1\%2C1\%2C1\%2C1\%2C1\%2C\%22\%E2\%80\%A2\%22\%2C\%22\%E2\%80\%A2\%22\%2C\%22X\%22\%5D\%2C\%5B1\%2C\%22\%E2\%80\%A2\%22\%2C1\%2C1\%2C\%22\%E2\%80\%A2\%22\%2C\%22X\%22\%5D\%2C\%5B1\%2C1\%2C1\%2C1\%2C\%22X\%22\%5D\%2C\%5B1\%2C1\%2C1\%2C\%22\%E2\%80\%A2\%22\%2C\%22X\%22\%5D\%2C\%5B1\%2C1\%2C\%22\%E2\%80\%A2\%22\%2C1\%2C\%22X\%22\%5D\%2C\%5B1\%2C1\%2C1\%2C1\%2C1\%2C\%22\%E2\%97\%A6\%22\%2C\%22X\%22\%5D\%2C\%5B1\%2C\%22\%E2\%97\%A6\%22\%2C1\%2C1\%2C\%22\%E2\%97\%A6\%22\%2C\%22X\%22\%5D\%2C\%5B1\%2C1\%2C1\%2C1\%2C1\%2C\%22\%E2\%80\%A2\%22\%2C\%22\%E2\%80\%A2\%22\%2C\%22X\%22\%5D\%2C\%5B\%22\%E2\%97\%A6\%22\%2C1\%2C1\%2C1\%2C1\%2C1\%2C1\%2C\%22\%E2\%80\%A2\%22\%2C\%22X\%22\%5D\%2C\%5B\%22Chance4\%22\%2C1\%2C1\%2C1\%2C\%22Chance\%22\%2C\%22Chance\%22\%2C1\%2C1\%2C\%22\%E2\%80\%A2\%22\%5D\%2C\%5B1\%2C1\%2C1\%2C1\%2C1\%2C1\%2C1\%2C\%22\%E2\%80\%A2\%22\%2C\%22X\%22\%5D\%2C\%5B\%22Chance4\%22\%2C1\%2C1\%2C1\%2C\%22Chance\%22\%2C\%22Chance\%22\%2C1\%2C1\%2C\%22\%E2\%80\%A2\%22\%5D\%2C\%5B\%22\%E2\%80\%A2\%22\%2C1\%2C1\%2C1\%2C1\%2C1\%2C1\%2C\%22\%E2\%80\%A2\%22\%2C\%22X\%22\%5D\%2C\%5B1\%2C1\%2C1\%2C1\%2C\%22\%E2\%97\%A6\%22\%2C\%22X\%22\%5D\%2C\%5B1\%2C1\%2C1\%2C1\%2C\%22\%E2\%97\%A6\%22\%2C1\%2C\%22\%E2\%80\%A2\%22\%2C\%22X\%22\%5D\%2C\%5B\%22\%E2\%97\%A6\%22\%2C1\%2C1\%2C1\%2C1\%2C1\%2C1\%2C\%22\%E2\%80\%A2\%22\%2C\%22X\%22\%5D\%2C\%5B\%22Chance4\%22\%2C1\%2C1\%2C1\%2C\%22Chance\%22\%2C\%22Chance\%22\%2C1\%2C1\%2C\%22\%E2\%80\%A2\%22\%5D\%2C\%5B1\%2C1\%2C1\%2C1\%2C1\%2C1\%2C1\%2C\%22\%E2\%80\%A2\%22\%2C\%22X\%22\%5D\%2C\%5B\%22Chance4\%22\%2C1\%2C1\%2C1\%2C\%22Chance\%22\%2C\%22Chance\%22\%2C1\%2C1\%2C\%22\%E2\%80\%A2\%22\%5D\%2C\%5B\%22\%E2\%80\%A2\%22\%2C1\%2C1\%2C1\%2C1\%2C1\%2C1\%2C\%22\%E2\%80\%A2\%22\%2C\%22X\%22\%5D\%2C\%5B1\%2C1\%2C1\%2C1\%2C1\%2C\%22\%E2\%80\%A2\%22\%2C\%22\%E2\%80\%A2\%22\%2C\%22X\%22\%5D\%2C\%5B1\%2C1\%2C1\%2C\%22\%E2\%97\%A6\%22\%2C\%22X\%22\%5D\%2C\%5B1\%2C1\%2C1\%2C1\%2C\%22\%E2\%97\%A6\%22\%2C\%22X\%22\%5D\%2C\%5B1\%2C\%22\%E2\%80\%A2\%22\%2C1\%2C\%22\%E2\%97\%A6\%22\%2C1\%2C\%22X\%22\%5D\%2C\%5B1\%2C1\%2C1\%2C1\%2C1\%2C\%22\%E2\%80\%A2\%22\%2C\%22\%E2\%80\%A2\%22\%2C\%22X\%22\%5D\%2C\%5B\%22\%E2\%97\%A6\%22\%2C1\%2C1\%2C1\%2C1\%2C1\%2C1\%2C\%22\%E2\%80\%A2\%22\%2C\%22X\%22\%5D\%2C\%5B\%22Chance4\%22\%2C1\%2C1\%2C1\%2C\%22Chance\%22\%2C\%22Chance\%22\%2C1\%2C1\%2C\%22\%E2\%80\%A2\%22\%5D\%2C\%5B1\%2C1\%2C1\%2C1\%2C1\%2C1\%2C1\%2C\%22\%E2\%80\%A2\%22\%2C\%22X\%22\%5D\%2C\%5B\%22Chance4\%22\%2C1\%2C1\%2C1\%2C\%22Chance\%22\%2C\%22Chance\%22\%2C1\%2C1\%2C\%22\%E2\%80\%A2\%22\%5D\%2C\%5B\%22\%E2\%80\%A2\%22\%2C1\%2C1\%2C1\%2C1\%2C1\%2C1\%2C\%22\%E2\%80\%A2\%22\%2C\%22X\%22\%5D\%2C\%5B1\%2C1\%2C1\%2C1\%2C\%22\%E2\%97\%A6\%22\%2C\%22X\%22\%5D\%2C\%5B1\%2C1\%2C1\%2C1\%2C\%22\%E2\%97\%A6\%22\%2C1\%2C\%22\%E2\%80\%A2\%22\%2C\%22X\%22\%5D\%2C\%5B\%22\%E2\%97\%A6\%22\%2C1\%2C1\%2C1\%2C1\%2C1\%2C1\%2C\%22\%E2\%80\%A2\%22\%2C\%22X\%22\%5D\%2C\%5B\%22Chance4\%22\%2C1\%2C1\%2C1\%2C\%22Chance\%22\%2C\%22Chance\%22\%2C1\%2C1\%2C\%22\%E2\%80\%A2\%22\%5D\%2C\%5B1\%2C1\%2C1\%2C1\%2C1\%2C1\%2C1\%2C\%22\%E2\%80\%A2\%22\%2C\%22X\%22\%5D\%2C\%5B\%22Chance4\%22\%2C1\%2C1\%2C1\%2C\%22Chance\%22\%2C\%22Chance\%22\%2C1\%2C1\%2C\%22\%E2\%80\%A2\%22\%5D\%2C\%5B\%22\%E2\%80\%A2\%22\%2C1\%2C1\%2C1\%2C1\%2C1\%2C1\%2C\%22\%E2\%80\%A2\%22\%2C\%22X\%22\%5D\%2C\%5B1\%2C1\%2C1\%2C1\%2C1\%2C\%22\%E2\%80\%A2\%22\%2C\%22\%E2\%80\%A2\%22\%2C\%22X\%22\%5D\%2C\%5B1\%2C1\%2C\%22\%E2\%80\%A2\%22\%2C\%22X\%22\%5D\%2C\%5B1\%2C1\%2C1\%2C\%22\%E2\%80\%A2\%22\%2C\%22X\%22\%5D\%2C\%5B1\%2C1\%2C1\%2C1\%2C\%22\%E2\%97\%A6\%22\%2C\%22X\%22\%5D\%2C\%5B1\%2C\%22\%E2\%80\%A2\%22\%2C1\%2C\%22\%E2\%80\%A2\%22\%2C1\%2C\%22X\%22\%5D\%2C\%5B1\%2C1\%2C\%22\%E2\%80\%A2\%22\%2C\%22X\%22\%5D\%2C\%5B1\%2C1\%2C1\%2C1\%2C1\%2C\%22\%E2\%80\%A2\%22\%2C\%22\%E2\%80\%A2\%22\%2C\%22X\%22\%5D\%2C\%5B\%22\%E2\%97\%A6\%22\%2C1\%2C1\%2C1\%2C1\%2C1\%2C1\%2C\%22\%E2\%80\%A2\%22\%2C\%22X\%22\%5D\%2C\%5B\%22Chance4\%22\%2C1\%2C1\%2C1\%2C\%22Chance\%22\%2C\%22Chance\%22\%2C1\%2C1\%2C\%22\%E2\%80\%A2\%22\%5D\%2C\%5B1\%2C1\%2C1\%2C1\%2C1\%2C1\%2C1\%2C\%22\%E2\%80\%A2\%22\%2C\%22X\%22\%5D\%2C\%5B\%22Chance4\%22\%2C1\%2C1\%2C1\%2C\%22Chance\%22\%2C\%22Chance\%22\%2C1\%2C1\%2C\%22\%E2\%80\%A2\%22\%5D\%2C\%5B\%22\%E2\%80\%A2\%22\%2C1\%2C1\%2C1\%2C1\%2C1\%2C1\%2C\%22\%E2\%80\%A2\%22\%2C\%22X\%22\%5D\%2C\%5B1\%2C1\%2C1\%2C1\%2C\%22\%E2\%97\%A6\%22\%2C\%22X\%22\%5D\%2C\%5B1\%2C1\%2C1\%2C1\%2C\%22\%E2\%97\%A6\%22\%2C1\%2C\%22\%E2\%80\%A2\%22\%2C\%22X\%22\%5D\%2C\%5B\%22\%E2\%97\%A6\%22\%2C1\%2C1\%2C1\%2C1\%2C1\%2C1\%2C\%22\%E2\%80\%A2\%22\%2C\%22X\%22\%5D\%2C\%5B\%22Chance4\%22\%2C1\%2C1\%2C1\%2C\%22Chance\%22\%2C\%22Chance\%22\%2C1\%2C1\%2C\%22\%E2\%80\%A2\%22\%5D\%2C\%5B1\%2C1\%2C1\%2C1\%2C1\%2C1\%2C1\%2C\%22\%E2\%80\%A2\%22\%2C\%22X\%22\%5D\%2C\%5B\%22Chance4\%22\%2C1\%2C1\%2C1\%2C\%22Chance\%22\%2C\%22Chance\%22\%2C1\%2C1\%2C\%22\%E2\%80\%A2\%22\%5D\%2C\%5B\%22\%E2\%80\%A2\%22\%2C1\%2C1\%2C1\%2C1\%2C1\%2C1\%2C\%22\%E2\%80\%A2\%22\%2C\%22X\%22\%5D\%2C\%5B1\%2C1\%2C1\%2C1\%2C1\%2C\%22\%E2\%80\%A2\%22\%2C\%22\%E2\%80\%A2\%22\%2C\%22X\%22\%5D\%2C\%5B1\%2C1\%2C1\%2C\%22\%E2\%80\%A2\%22\%2C\%22X\%22\%5D\%2C\%5B1\%2C1\%2C1\%2C1\%2C\%22\%E2\%97\%A6\%22\%2C\%22X\%22\%5D\%2C\%5B1\%2C\%22\%E2\%80\%A2\%22\%2C1\%2C\%22\%E2\%80\%A2\%22\%2C1\%2C\%22X\%22\%5D\%2C\%5B1\%2C1\%2C1\%2C1\%2C1\%2C\%22\%E2\%80\%A2\%22\%2C\%22\%E2\%80\%A2\%22\%2C\%22X\%22\%5D\%2C\%5B\%22\%E2\%97\%A6\%22\%2C1\%2C1\%2C1\%2C1\%2C1\%2C1\%2C\%22\%E2\%80\%A2\%22\%2C\%22X\%22\%5D\%2C\%5B\%22Chance4\%22\%2C1\%2C1\%2C1\%2C\%22Chance\%22\%2C\%22Chance\%22\%2C1\%2C1\%2C\%22\%E2\%80\%A2\%22\%5D\%2C\%5B1\%2C1\%2C1\%2C1\%2C1\%2C1\%2C1\%2C\%22\%E2\%80\%A2\%22\%2C\%22X\%22\%5D\%2C\%5B\%22Chance4\%22\%2C1\%2C1\%2C1\%2C\%22Chance\%22\%2C\%22Chance\%22\%2C1\%2C1\%2C\%22\%E2\%80\%A2\%22\%5D\%2C\%5B\%22\%E2\%80\%A2\%22\%2C1\%2C1\%2C1\%2C1\%2C1\%2C1\%2C\%22\%E2\%80\%A2\%22\%2C\%22X\%22\%5D\%2C\%5B1\%2C1\%2C1\%2C1\%2C\%22\%E2\%97\%A6\%22\%2C\%22X\%22\%5D\%2C\%5B1\%2C1\%2C1\%2C1\%2C\%22\%E2\%97\%A6\%22\%2C1\%2C\%22\%E2\%80\%A2\%22\%2C\%22X\%22\%5D\%2C\%5B\%22\%E2\%97\%A6\%22\%2C1\%2C1\%2C1\%2C1\%2C1\%2C1\%2C\%22\%E2\%80\%A2\%22\%2C\%22X\%22\%5D\%2C\%5B\%22Chance4\%22\%2C1\%2C1\%2C1\%2C\%22Chance\%22\%2C\%22Chance\%22\%2C1\%2C1\%2C\%22\%E2\%80\%A2\%22\%5D\%2C\%5B1\%2C1\%2C1\%2C1\%2C1\%2C1\%2C1\%2C\%22\%E2\%80\%A2\%22\%2C\%22X\%22\%5D\%2C\%5B\%22Chance4\%22\%2C1\%2C1\%2C1\%2C\%22Chance\%22\%2C\%22Chance\%22\%2C1\%2C1\%2C\%22\%E2\%80\%A2\%22\%5D\%2C\%5B\%22\%E2\%80\%A2\%22\%2C1\%2C1\%2C1\%2C1\%2C1\%2C1\%2C\%22\%E2\%80\%A2\%22\%2C\%22X\%22\%5D\%2C\%5B1\%2C1\%2C1\%2C1\%2C1\%2C\%22\%E2\%80\%A2\%22\%2C\%22\%E2\%80\%A2\%22\%2C\%22X\%22\%5D\%2C\%5B1\%2C\%22\%E2\%80\%A2\%22\%2C1\%2C1\%2C\%22\%E2\%97\%A6\%22\%2C\%22X\%22\%5D\%2C\%5B1\%2C1\%2C\%22\%E2\%80\%A2\%22\%2C\%22\%E2\%80\%A2\%22\%2C\%22X\%22\%5D\%2C\%5B1\%2C1\%2C1\%2C1\%2C\%22\%E2\%80\%A2\%22\%2C\%22\%E2\%97\%A6\%22\%2C\%22X\%22\%5D\%2C\%5B\%22QFT\%E2\%80\%A06\%22\%5D\%2C\%5B\%22Uncounting6\%22\%5D\%5D\%7D}{quirk link} to a circuit where we iterate on $N=32$ elements using $70$ Toffolis ($\simeq2.5N$) via balanced binary trees and conditionally clean qubits.

\subsection{Unary iteration and QROM using dirty ancilla}\label{sec:unary_iteration_dirty}
If we had access to $n$ dirty qubits instead of clean qubits. We can divide the selection register of size $n$ into a top half of size $k$ and a bottom half of size $n - k$. 
To iterate on the top half, we execute $K = 2^k$ $k$-bit Toffoli gates using constructions described in \sec{mcx_constructions}.  
For each of the $K$ iterations on the top half, we execute a perform a dirty QROM read on the bottom $n - k$ qubits using $n - k$ borrowed dirty ancillae. 
Each of these QROM reads has the same tree shape and is of size $N / K$. 
Since we the ancilla qubits this time are borrowed and can be in an unknown state, we also need to perform a round of laddered toggle detection as described in \sec{laddered_toggle_detection}. 
Because all the $K$ QROM trees are of identical shape with differing data elements, we can do the toggle detection via just 1 more QROM read of the same size but where for each leaf node $i$, we load all data elements from the $i$'th leaf nodes of each of the $K$ QROM trees. 

Let $\text{QROMDirtyCost}(M)$ denote the Toffoli cost of performing a QROM read on $M$ data elements using dirty ancillae (this cost will depend on the specific tree structure used – balanced binary or skewed – and the details of the implementation). 
The overall cost of the procedure is 
$$
    (K + 1) \times \text{QROMDirtyCost}(N / K) + \mathcal{O}(K * k)
$$
Setting $K = \sqrt{N}$ gives us a cost of $1.5N + \mathcal{O}(n \sqrt{N})$ for dirty QROM via balanced binary tree and $1.25N + \mathcal{O}(n \sqrt{N})$ for dirty QROM via the skewed trees case. 

Here is a \href{https://algassert.com/quirk#circuit=

\begin{figure}
\begin{adjustwidth}{-2cm}{-2cm}
\begin{subfigure}{0.3\linewidth}
\includegraphics[width=\linewidth]{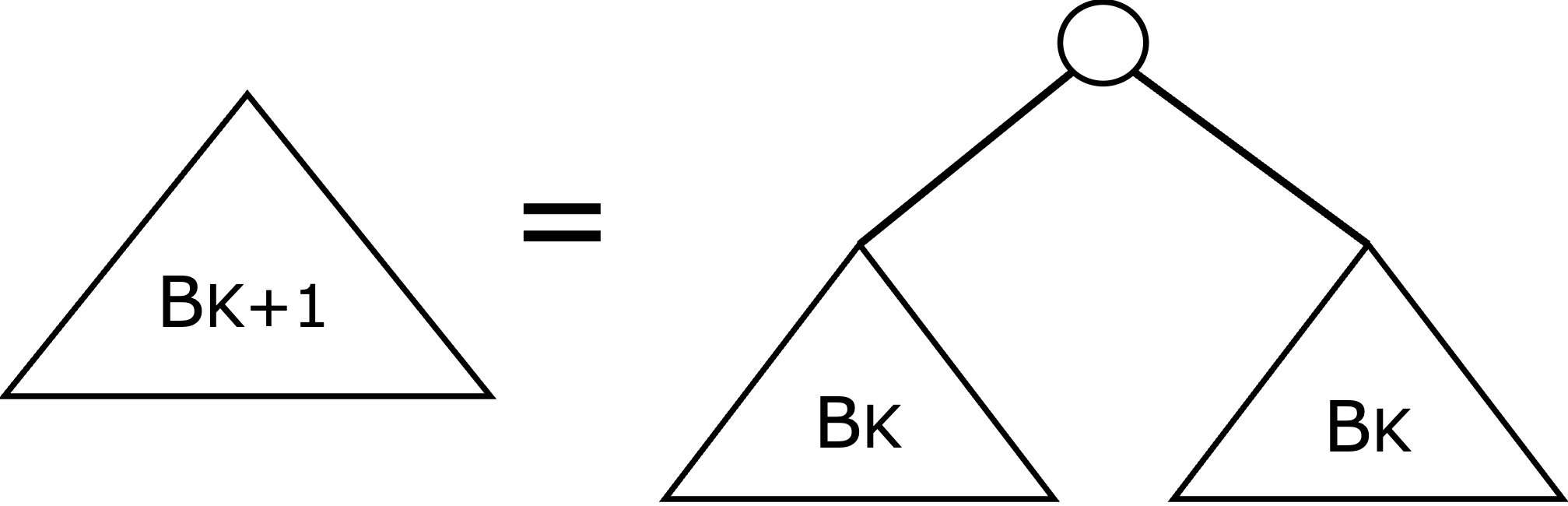}
\caption{Recursive definition of a balanced binary tree of depth $K + 1$. Number of leaf nodes in $B_{K}$ is $2^{K}$ and number of internal nodes is $2^{K} - 1$. 
When $B_{K+1}$ is used for unary iteration, the left subtree corresponds to indices in the range $[0, 2^{K})$ (i.e. $K + 1$'th bit is 0) and the right subtree corresponds to indices in the range $[2^{K}, 2^{K + 1})$ (i.e. $K+1$'th bit is 1). 
Outputs are associated only with the leaf nodes.}
\label{fig:qrom_binary_tree:sfig1}
\end{subfigure}~
\begin{subfigure}{0.6\linewidth}
\includegraphics[width=\linewidth]{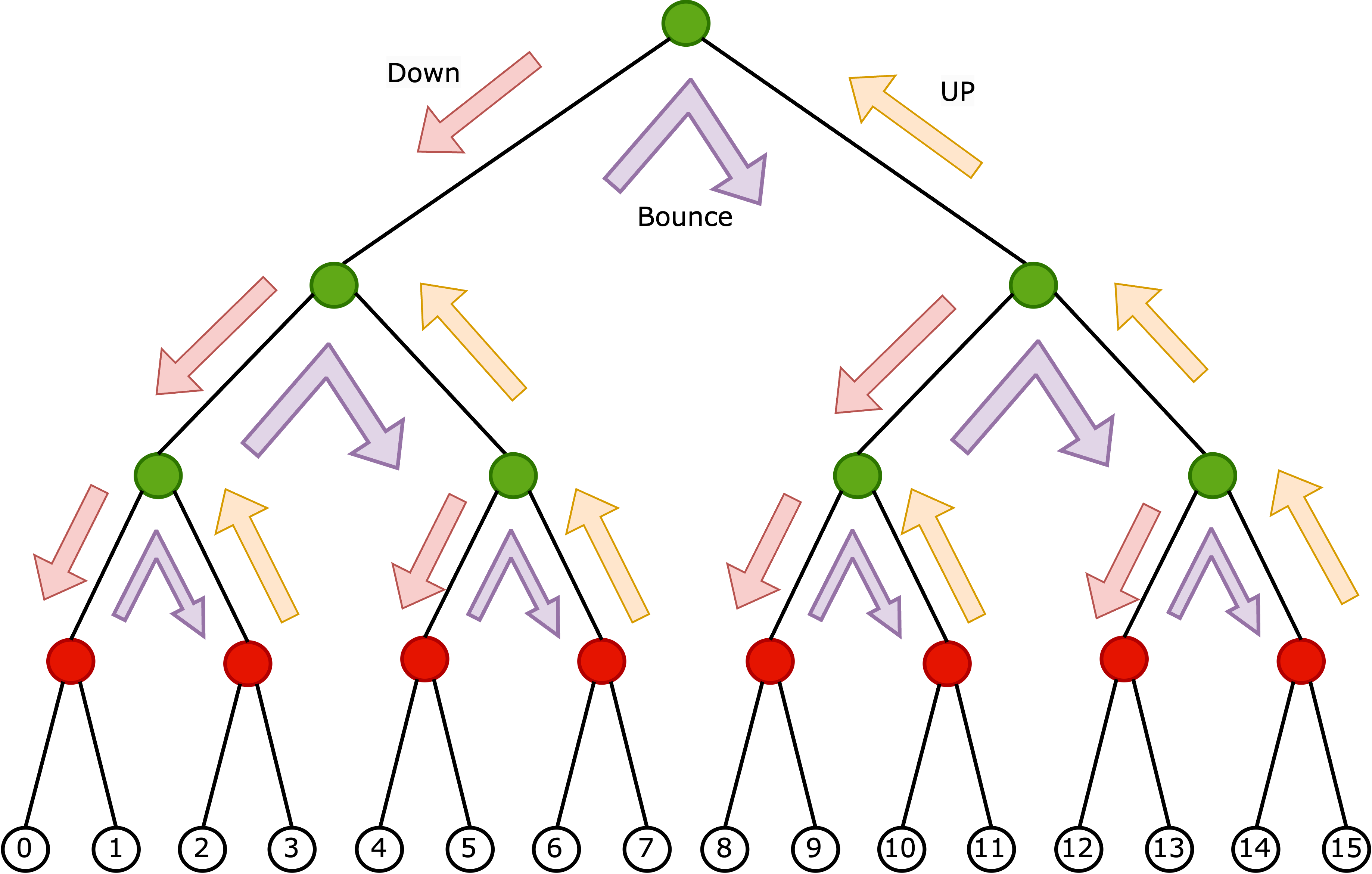}
\caption{
When used for unary iteration on $N$ elements, a balanced binary tree has $N/2$ leaf nodes (marked in red) and $N/2 - 1$ internal nodes (marked in green). A DFS traversal of the tree yields $N / 2 - 1$ DOWN moves, $N / 2 - 1$ BOUNCE moves and $N / 2 - 1$ UP moves. For controlled unary iteration, the number of UP / DOWN moves is $N/2$ (an edge comes in to the root node). Each move corresponds to a circuit element as shown in \fig{qrom_binary_tree:sfig3} and \fig{qrom_binary_tree:sfig4}
}
\label{fig:qrom_binary_tree:sfig2}
\end{subfigure}
\par\bigskip
\begin{subfigure}{0.49\linewidth}
\includegraphics[width=\linewidth, left]{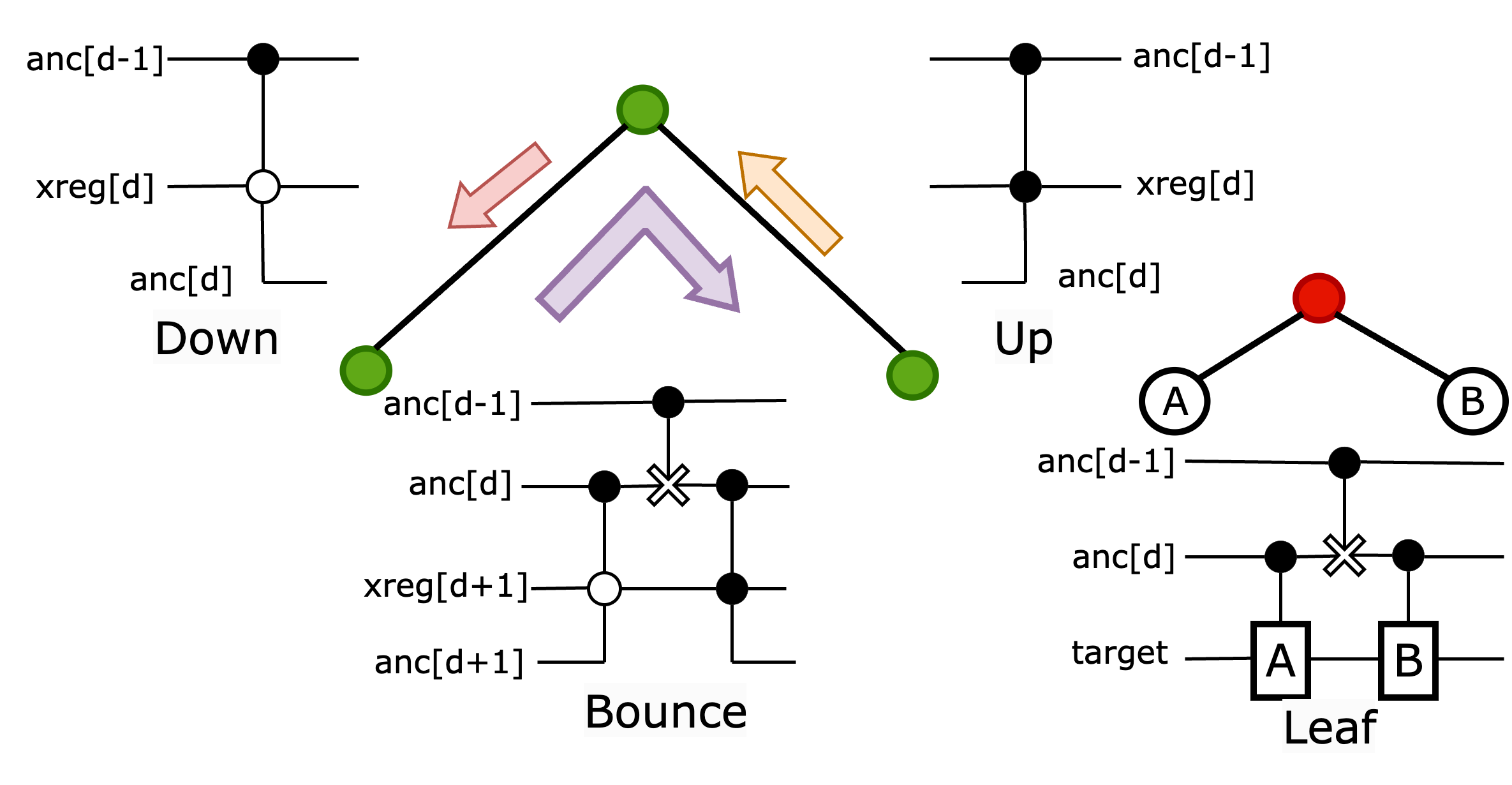}
\caption{Circuit elements produced during tree traversal of a balanced binary tree with clean ancilla. Every DOWN and BOUNCE traversal as a Toffoli cost of 1. 
UP traversal has Toffoli cost of 0 since the $\text{AND}$ gate can be uncomputed using measurement based uncomputation~\cite{Gidney2018halvingcostof}.
LEAF traversal to consume data has a Toffoli cost of 0. 
Thus, Controlled Unary iteration over $N=2^{n}$ elements using $n$ clean ancilla has a Toffoli cost of $N - 1$. This yields the unary iteration construction from~\cite{Babbush2018}}
\label{fig:qrom_binary_tree:sfig3}
\end{subfigure}~
\begin{subfigure}{0.49\linewidth}
\includegraphics[width=\linewidth, right]{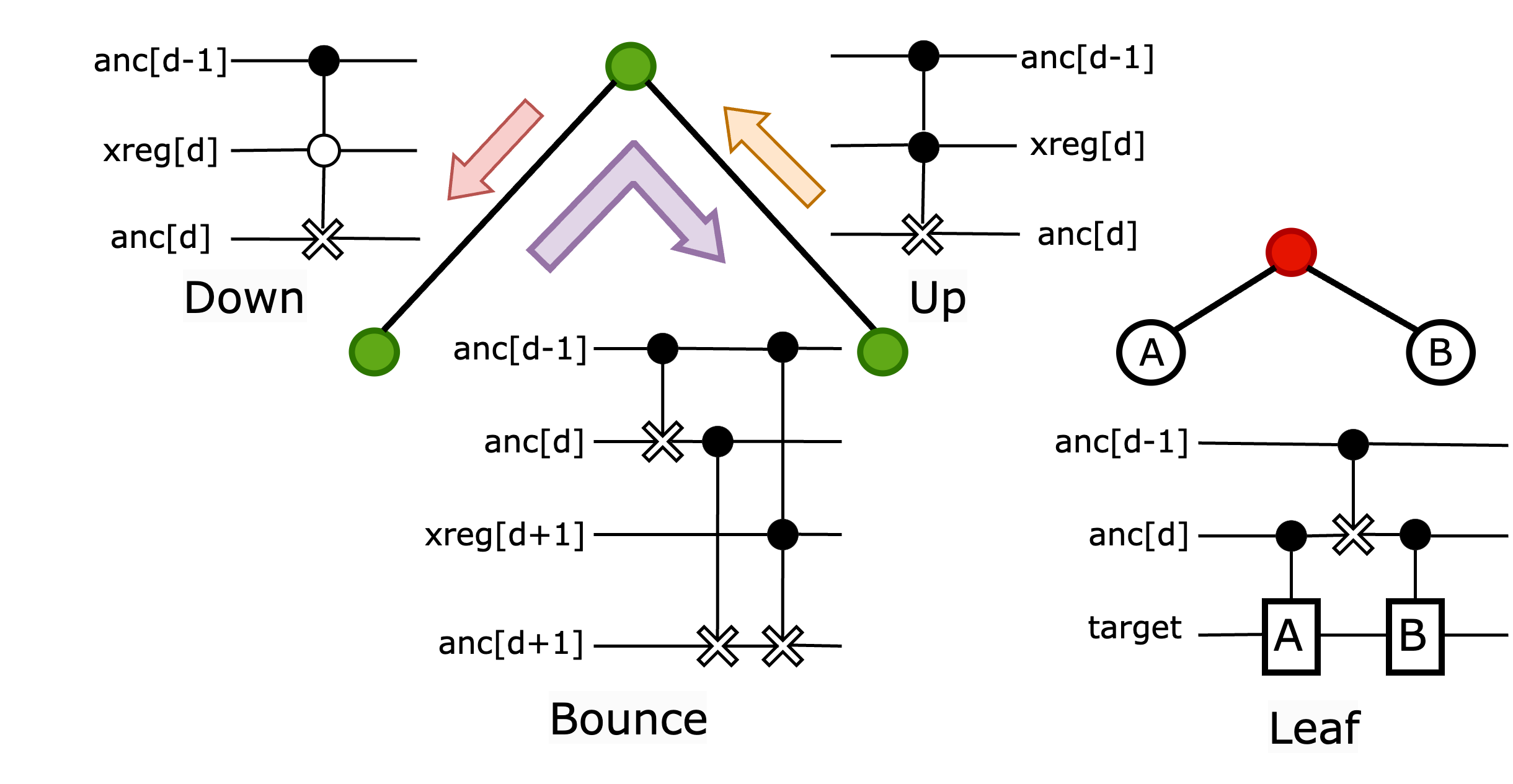}
\caption{Circuit elements produced during tree traversal of a balanced binary tree with (potentially dirty) ancilla where measurement based uncomputation is not allowed. 
Every DOWN, BOUNCE and UP traversal now has a Toffoli cost of 1.
LEAF traversal to consume data has a Toffoli cost of 0. 
Controlled Unary iteration over $N=2^{n}$ elements using $n$ ancilla (without measurement based uncompuation) has a Toffoli cost of $1.5N - 1$. This yields the unary iteration construction from~\cite{Childs2018}}
\label{fig:qrom_binary_tree:sfig4}
\end{subfigure}
\begin{subfigure}{\linewidth}
\includegraphics[width=\linewidth]
{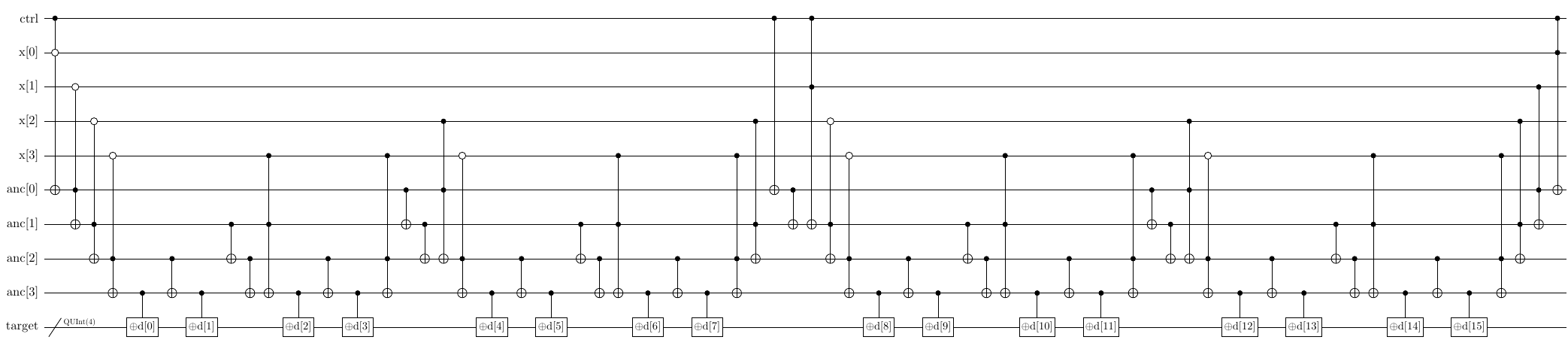}
\caption{QROM circuit to load $N=16$ data elements using $4$ ancilla qubits (without measurement based uncomputation) and $1.5N - 1 = 23$ Toffoli gates. 
Uses circuit elements described in \fig{qrom_binary_tree:sfig4} and forms the basis of (i) unary iteration using only 2 clean ancilla and $2.5N$ Toffoli AND (ii) unary iteartion using $n=\log_2{N}$ dirty ancilla and $1.5N + \mathcal{O}(n\sqrt{N})$ Toffoli}
\label{fig:qrom_binary_tree:sfig5}
\end{subfigure}

\caption{Unary iteration using balanced binary trees}
\label{fig:qrom_binary_tree}
\end{adjustwidth}
\end{figure}

\begin{figure}
\begin{adjustwidth}{-2cm}{-2cm}
\begin{subfigure}{0.3\linewidth}
\includegraphics[width=\linewidth]{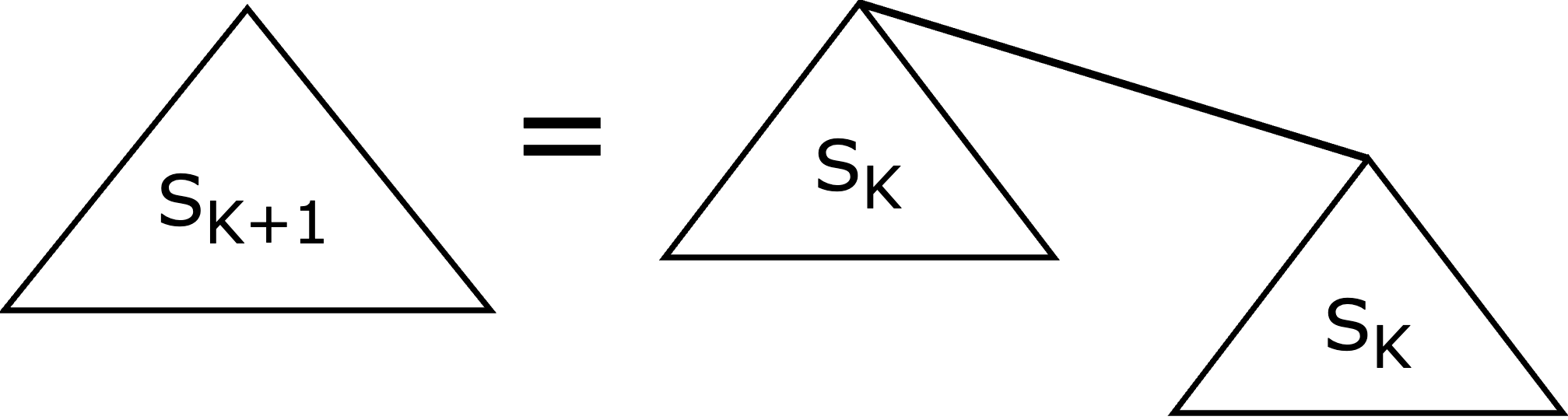}
\caption{Recursive definition of a Skew tree of depth $K + 1$. The size of $S_{k}$ is $2^{k}$, thus when doing unary iteration a skew tree associates an output with every node. The basic idea of using skewed trees for unary iteration is to replace circuits of the form ``if $\lnot C$ then do $A$; if $C$ then do $B$" with circuits of the form ``do $A$; if $C$ then do $A^{-1} \cdot B$". Instead of doing $A$ conditionally, we do $A$ unconditionally but undo $A$ in addition to doing $B$ when $C$ is true. This optimization works best when $A^{-1} \cdot B$ is efficient to apply, which is the case for $\text{QROM}$}
\label{fig:qrom_skew_tree:sfig1}
\end{subfigure}~
\begin{subfigure}{0.6\linewidth}
\includegraphics[width=\linewidth]{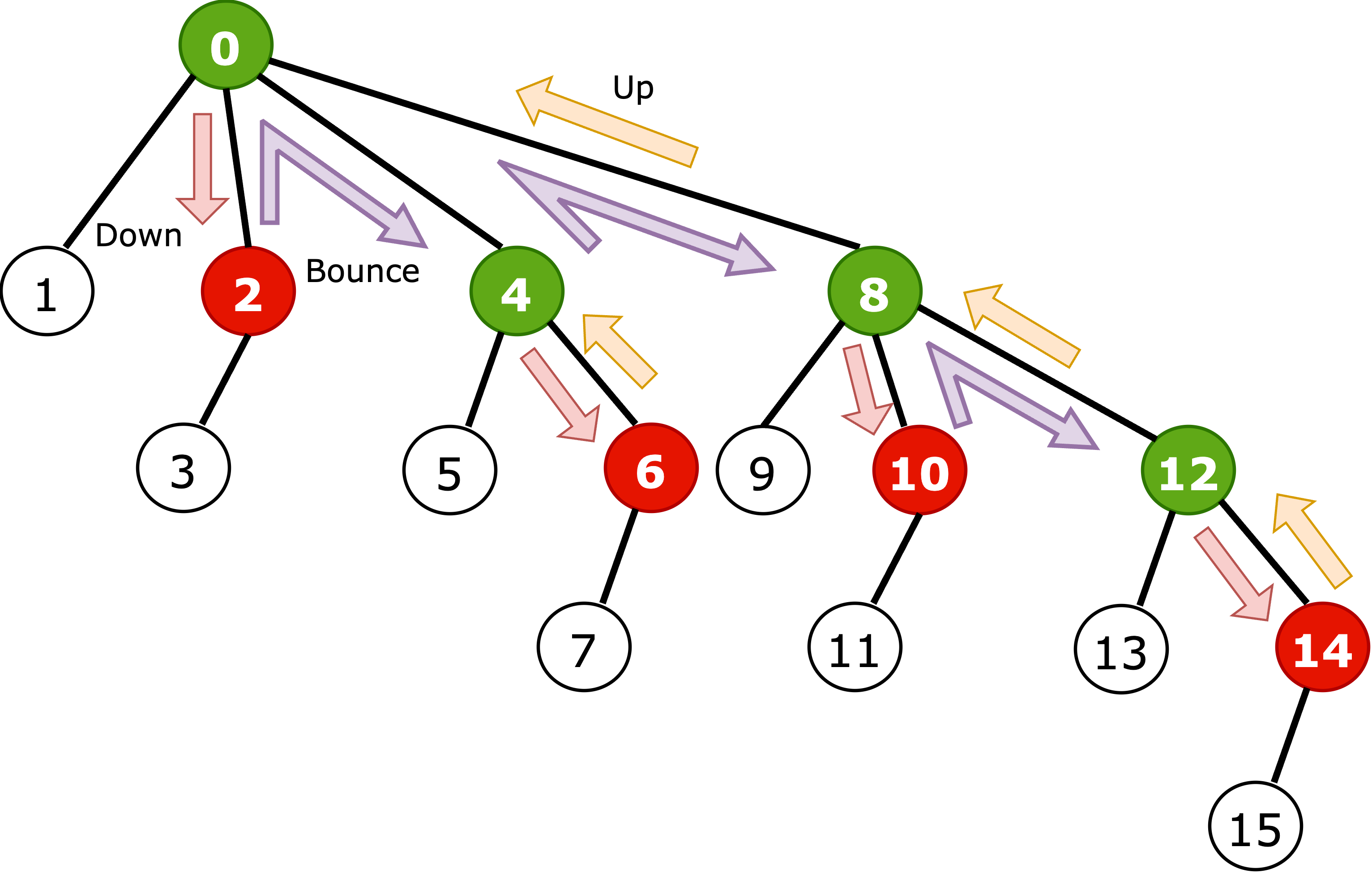}
\caption{When used for unary iteration on $N$ elements, a skew tree has $N/4$ leaf nodes (marked in red) and $N/4$ internal nodes (marked in green). A DFS traversal of the tree yields $N/4$ DOWN moves, $N/4 - 1$ BOUNCE moves and $N/4$ UP moves. Each move corresponds to a circuit element as shown in \fig{qrom_skew_tree:sfig3} and \fig{qrom_skew_tree:sfig4}.
}
\label{fig:qrom_skew_tree:sfig2}
\end{subfigure}
\par\bigskip
\begin{subfigure}{0.49\linewidth}
\includegraphics[width=\linewidth, left]{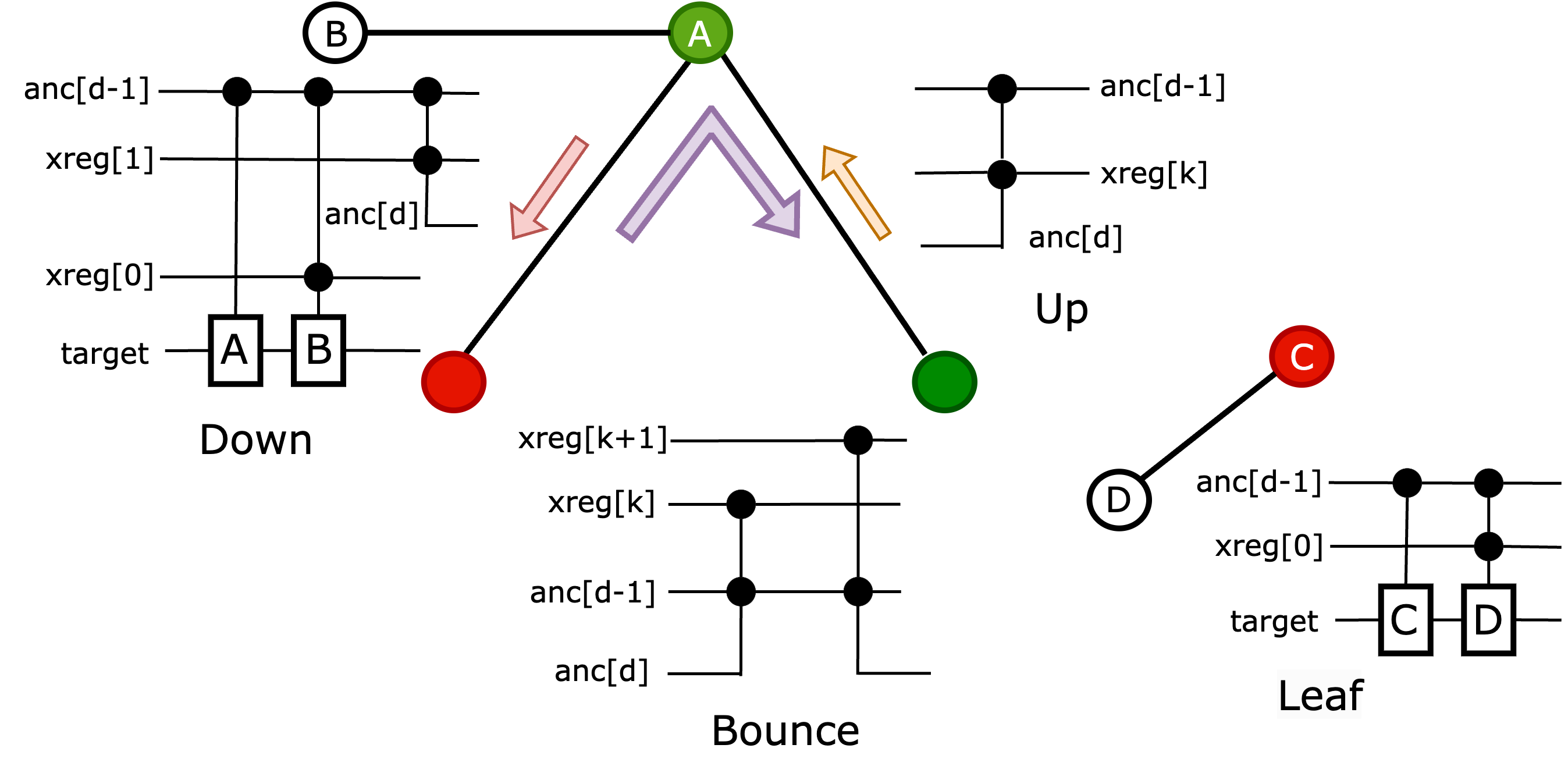}
\caption{Circuit elements produced during tree traversal of a skew tree with clean ancilla. DOWN traversal has a Toffoli cost of 2, BOUNCE traversal has a Toffoli cost of 1, UP traversal has a Toffoli cost of 0 since the $\text{AND}$ gate can be uncomputed using measurement based uncomputation~\cite{Gidney2018halvingcostof}.
LEAF traversal to consume data has a Toffoli cost of 1. 
Thus, Controlled Unary iteration over $N=2^{n}$ elements using $n$ clean ancilla has a Toffoli cost of $N - 1$.}
\label{fig:qrom_skew_tree:sfig3}
\end{subfigure}~
\begin{subfigure}{0.49\linewidth}
\includegraphics[width=\linewidth, right]{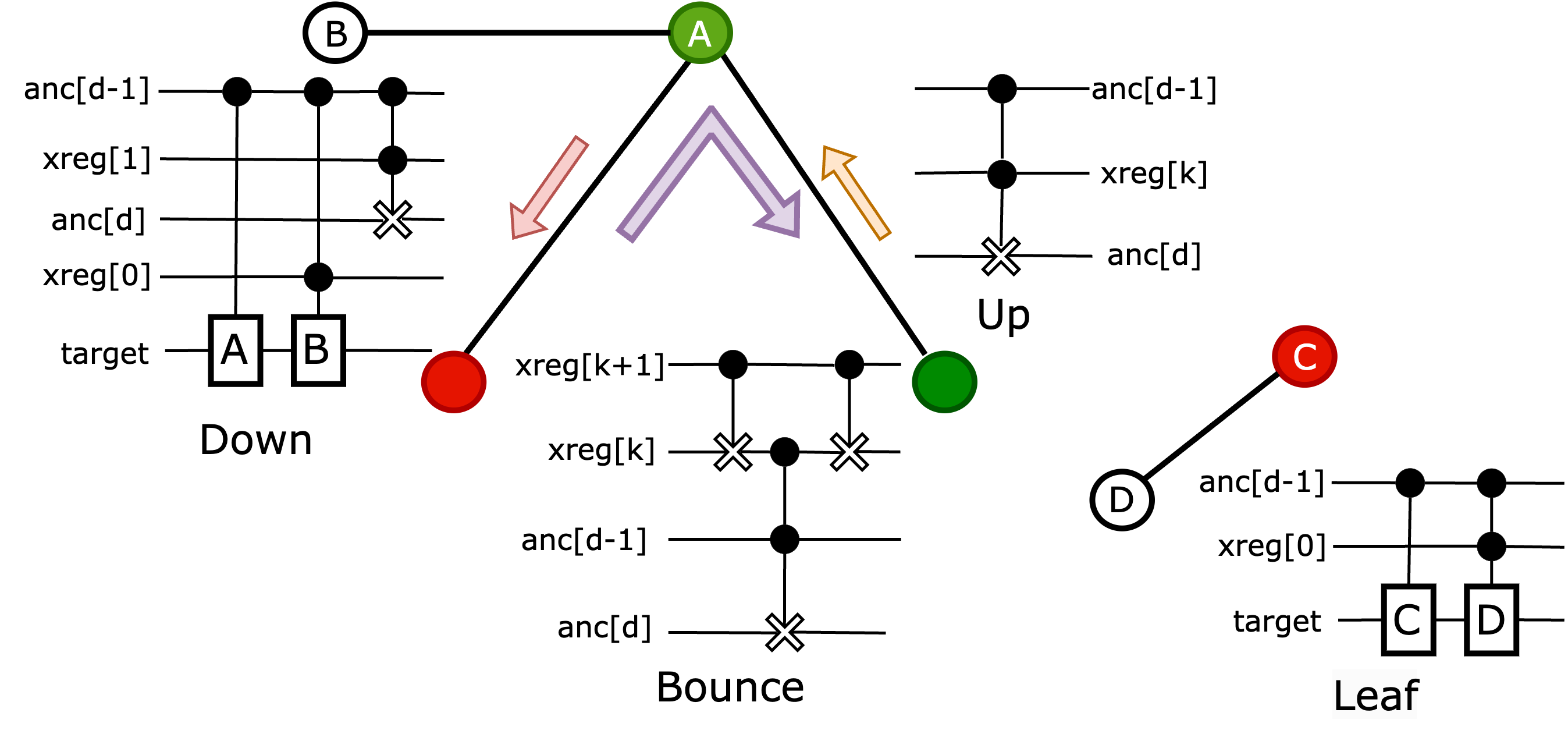}
\caption{Circuit elements produced during tree traversal of a skew tree with (potentially dirty) ancilla where measurement based uncomputation is not allowed. 
DOWN traversal has a Toffoli cost of 2. BOUNCE, UP and LEAF traversals have a Toffoli cost of 1. 
Thus, Controlled Unary iteration over $N=2^{n}$ elements using $n$ ancilla (without measurement based uncompuation) using Skew trees has a Toffoli cost of $\frac{5}{4}N - 1 = 1.25N - 1$.}
\label{fig:qrom_skew_tree:sfig4}
\end{subfigure}
\caption{Optimized Unary iteration using skew trees. The optimization can be applied only when $A^{-1} \cdot B$ is efficient to apply where $A$ and $B$ are multiplexed unitaries applied via unary iteration. For example - $\text{QROM}$ satisfies this criteria.}
\label{fig:qrom_skew_tree}
\end{adjustwidth}
\end{figure}

\section{Conclusion}\label{sec:conclusion}
In this work, we have demonstrated the utility of \emph{conditionally clean ancilla} qubits as a powerful tool for constructing efficient quantum circuits.  By formally defining this concept and providing a general framework for its application, we have shown how to leverage conditionally clean ancillae to achieve significant improvements in gate count and depth for several fundamental quantum primitives.

We introduced \emph{laddered toggle detection}, an improved technique for replacing clean ancillae with dirty ancillae, mitigating the exponential overhead that can arise from naive toggle detection strategies. We then presented novel circuit constructions for the $n$-bit Toffoli gate, the $n$-bit Incrementer, the $n$-bit Quantum-Classical Comparator, and Unary Iteration (including QROM).  These constructions improve upon the previously best known constructions and achieve the lowest Toffoli gate counts in the sublinear ancilla regime.

When compared to constructions using a linear number of ancilla qubits, our constructions have a higher number of T / Toffoli gate counts because uncomputing a clean ancilla qubit can often be done with only measurement + clifford operations and requires no T / Toffoli gates. However, we do not have a way to cheaply uncompute the conditionally clean ancilla qubits we use as part of our constructions. This gives us a 2x overhead in terms of T / Toffoli gate counts. It's an open question to figure out whether one can extend the ideas of measurement based uncomputation to cheaply uncompute the conditionally clean ancilla qubits, similar to the clean ancilla case. 

We believe that conditionally clean ancillae should become a standard tool in the quantum circuit designer's toolkit and will be useful for circuit optimizations beyond the specific circuit constructions we provide.

\section{Contributions}

Craig guided the project and came up with some initial constructions.
Tanuj improved the constructions, found additional ones, and wrote the paper as well as the accompanying code.

\section{Acknowledgements}
We thank Robin Kothari for helpful discussions on the abstract computer science problem presented in \problemref{problem_1}. We thank Hartmut Neven for creating an environment where this research was possible.

\clearpage 

\printbibliography

\end{document}